\documentclass[12pt]{amsart}
\usepackage{amssymb,latexsym}
\usepackage{amsfonts}
\usepackage{amsmath}
\usepackage[colorlinks,linkcolor=blue,anchorcolor=blue,citecolor=blue]{hyperref}
\usepackage{algorithm}
\usepackage{enumerate}
\usepackage{algpseudocode}
\usepackage{verbatim}
\usepackage{graphicx}

\def\F{\mathbb{F}}

\def\Tr{\text{\rm Tr}}
\def\wt{\text{\rm wt}}
\def\Ker{\text{\rm Ker}}

\def\Im{\text{\rm Im}}

\def\x{\textrm{x}}
\def\y{\textrm{y}}
\def\u{\textrm{u}}
\def\D{\textrm{D}}
\def\c{\textrm{c}}
\def\Prj{\textrm{Prj}}

\newtheorem{theorem}{Theorem}[section]
\newtheorem{lemma}[theorem]{Lemma}
\newtheorem{example}[theorem]{Example}

\newtheorem{proposition}[theorem]{Proposition}

\theoremstyle{definition}

\newtheorem{remark}[theorem]{Remark}

\numberwithin{equation}{section}

\begin{document}

\title[Weight hierarchies]{Weight hierarchies of three-weight $p$-ary linear codes from inhomogeneous quadratic functions }

\author{Shupeng Hu}
\address{School of Mathematical Sciences, Qufu Normal University, Qufu Shandong, {\rm 273165}, China}

\author{Fei Li}
\address{Faculty of School of Statistics and Applied Mathematics,
Anhui University of Finance and Economics, Bengbu, {\rm 233041}, Anhui, P.R.China}
\email{cczxlf@163.com}

\author{Xiumei Li}
\address{School of Mathematical Sciences, Qufu Normal University, Qufu Shandong, {\rm 273165}, China}
\email{lxiumei2013@qfnu.edu.cn}

\subjclass[]{94B05 \and 11T71}

\keywords{Linear code; Quadratic form; Weight distribution; Weight hierarchy; Generalized Hamming weight.}

\begin{abstract}
The weight hierarchy of a linear code have been an important research topic in coding theory since Wei's original work in 1991.
In this paper, choosing $ D=\Big\{(x,y)\in \Big(\F_{p^{s_1}}\times\F_{p^{s_2}}\Big)\Big\backslash\{(0,0)\}: f(x)+\Tr_1^{s_2}(\alpha y)=0\Big\}$ as a defining set ,
where $\alpha\in\mathbb{F}_{p^{s_2}}^*$ and $f(x)$ is a quadratic form over $\mathbb{F}_{p^{s_1}}$ with values in $\F_p$, whether $f(x)$ is non-degenerate or not,
we construct a family of three-weight $p$-ary linear codes and
determine their weight distributions and weight hierarchies completely.
Most of the codes can be used in secret sharing schemes.
\end{abstract}

\maketitle

\section{Introduction}

\label{intro}


For an odd prime number $p$ and a positive integer $s$, let $ \mathbb{F}_{p^s} $ be the finite field with $ p^{s} $
elements and $\mathbb{F}_{p^s}^{*}$ be its multiplicative group.

An $[n,k,d]$ $p$-ary linear code $C$ is a $k$-dimensional subspace of $ \mathbb{F}_{p}^{n} $ with minimum (Hamming) distance $d$.
For $i\in\{1,2,\cdots,n\}$, denote by $A_{i}$ the number of codewords  in $C$ with Hamming weight $i$. The weight distribution of $C$ is defined by the sequence $(1,A_{1},\cdots,A_{n})$
and the weight enumerator of $C$ is defined by the polynomial $1+A_{1}x+A_{2}x^{2}+\cdots+A_{n}x^{n}$.
If the sequence $(A_{1},\cdots,A_{n})$ has $t$ nonzero $A_{i}$ with $ i=1,2,\cdots n$, then the code $C$ is called a $t$-weight code.
In coding theory, the weight distribution of linear codes is a classical research topic and attracts much attention. Furthermore,
linear codes with a few weights have important applications in authentication codes \cite{8DH07},
association schemes \cite{4CG84}, secret sharing \cite{21YD06} and strongly regular graphs \cite{5CK86}.


For an $[n,k,d]$ linear code $C$ and $1\leq r\leq k$, the concept of generalized Hamming weights (GHW) $d_r(C)$ can be viewed as an extension of Hamming weight.
We recall the definition of the generalized Hamming weights of linear codes.
Denote by
$ [C,r]_{p} $ the set of all its $\mathbb{F}_{p}$-vector subspaces with dimension $r$.
For $ H \in [C,r]_{p}$, the support $ \textrm{Supp}(H)$ of $H$ is the set of coordinates where not all codewords of $H$ are zero, that is,
$$ \textrm{Supp}(H)=\Big\{i:1\leq i\leq n, c_i\neq 0 \ \ \textrm{for some $c=(c_{1}, c_{2}, \cdots , c_{n})\in H$}\Big\}.$$
The $r$-th generalized Hamming weight of $C$, which is also called the $r$-th minimum support weight, is
$$
d_{r}(C)=\min\Big\{|\textrm{Supp}(H)|:H\in [C,r]_{p}\Big\}, \ 1\leq r\leq k.
$$
The set $ \{d_{1}(C),d_{2}(C),\cdots,d_{k}(C)\}$ is called the weight hierarchy of $C$.

The generalized Hamming weights was introduced in 1977 by Helleseth et al. \cite{9HK92,K78} and Victor Wei \cite{20WJ91} proved that they can characterize the cryptography performance of a linear code over the wiretap channel of type II and determined the weight hierarchies for some well-known classes of codes, such as Hamming codes, Reed-Muller
codes, Reed-Solomon codes and Golay codes etc. From then on, much more attention was paid to generalized Hamming weights (see \cite{3CC97,11HP98,14JL97,20WJ91,27WZ94}). To determine the weight hierarchy of linear codes is relatively challenging, during the past decade, there were research results
about weight hierarchies of some classes of linear codes \cite{2B19,13JF17,18LF17,19LF17,LL20-0,LL20,LL22,19LW19,21XL16,22YL15}.

The rest of this paper is organized as follows.
In Sec. 2, we introduce a generalized method of constructing linear code by defining sets and
give a corresponding formula for computing the generalized Hamming weights.
In Sec. 3, we construct $p$-ary linear codes with three weights and determine their weight distributions and weight hierarchies completely.
In Sec. 4, we summarize the paper.

\section{Preliminaries}
\subsection{The defining-set construction of linear codes}

Ding et al. \cite{6DD14} proposed a generic construction of linear codes as below.
Denote by $\Tr_1^s$ the trace function from $\mathbb{F}_{p^s}$ onto $\mathbb{F}_p$.
Let $ D= \{d_{1},d_{2},\cdots,d_{n}\}$ be a subset of $\mathbb{F}_{p^s}^{\ast}.$
A $p$-ary linear code of length $n$ is defined as follows:
\begin{eqnarray*}\label{defcode1}
         C_{D}=\{\left( \Tr_1^s(xd_1), \Tr_1^s(xd_2),\ldots, \Tr_1^s(xd_{n})\right):x\in \mathbb{F}_{p^{s}}\},
\end{eqnarray*}
and $D$ is called the defining set of $C_{D}$. By choosing some proper defining sets, some optimal linear
codes with a few weights can be constructed (see \cite{5DJ16,25DD15,9DL16,JL19,19TXF17,24ZL16} and the references therein).

Li et al. generalized Ding's constructing method and also constructed some linear codes with a few weights \cite{LY17}. The following constructing method can be regarded as a generalized form of Li's.
For some positive integers $s_1,s_2,\ldots,s_e$, define $s=\sum_{i=1}^e s_i$, $\mathbb{F}=\mathbb{F}_{p^{s_1}}\times\mathbb{F}_{p^{s_2}}\times\cdots\times\mathbb{F}_{p^{s_e}}$ and $\mathbb{F}^\star=\mathbb{F}\Big\backslash\{(0,0,\cdots,0)\}$.
For $X=(x_{1},x_{2},\cdots,x_{e}), Y=(y_{1},y_{2},\cdots,y_{e})\in \mathbb{F}$,
denote by $\Tr (X\odot Y)$ as follows,
$$
\Tr (X\odot Y)=\Tr_1 ^{s_1}(x_1y_1)+\Tr_1^{s_2}(x_2y_2)+\cdots+\Tr_1^{s_e}(x_e y_e).
$$

 Let $D=\{Y_1,Y_2,\cdots,Y_n\}$
be contained in $\mathbb{F}^\star$.
Define
a $p$-ary linear code $ C_{D} $ as follows:
\begin{eqnarray}\label{defcode1}
         C_{D}=\Big\{\Big(\Tr (X\odot Y_i\Big)_{Y_i\in D}:X
         \in \mathbb{F}\Big\}.
\end{eqnarray}
Here, $ D $ is also called the defining set. Now we give a formula to calculate the $r$-th generalized Hamming weight of
$ C_{\D}$ in \eqref{defcode1}, which
is a generalization of \cite[Proposition 1]{LL20}.

\begin{proposition}\label{pro:d_r}
For each $ r $ and $ 1\leq r \leq s$, if the dimension of $ C_{D} $
is $ s$, then
\begin{equation}\label{eq:d_r}
d_{r}(C_{D})= n-\max\big\{|D \cap H|: H \in [\mathbb{F},s-r]_{p}\big\}.
\end{equation}
\end{proposition}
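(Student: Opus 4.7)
The plan is to exploit the natural linear isomorphism between $\mathbb{F}$ and $C_D$ induced by the evaluation map, and then translate ``support'' into the language of annihilators with respect to the non-degenerate bilinear form $\Tr(X\odot Y)$. First I would set up the evaluation map $\varphi:\mathbb{F}\to C_D$ defined by $\varphi(X)=(\Tr(X\odot Y_i))_{Y_i\in D}$. This is $\mathbb{F}_p$-linear, and since by hypothesis $\dim_{\mathbb{F}_p} C_D=s=\dim_{\mathbb{F}_p}\mathbb{F}$, $\varphi$ must be a vector space isomorphism. Consequently, $r$-dimensional $\mathbb{F}_p$-subspaces $H\subseteq C_D$ correspond bijectively to $r$-dimensional $\mathbb{F}_p$-subspaces $V\subseteq \mathbb{F}$ via $H=\varphi(V)$.

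Next I would rewrite the support of $H=\varphi(V)$. By definition, $i\in\Supp(H)$ iff there exists $X\in V$ with $\Tr(X\odot Y_i)\neq 0$. Passing to the complement inside $\{1,\ldots,n\}$, $i\notin\Supp(H)$ iff $\Tr(X\odot Y_i)=0$ for every $X\in V$, i.e., iff $Y_i$ lies in the annihilator
\[
V^{\perp}=\bigl\{Y\in\mathbb{F}:\Tr(X\odot Y)=0\text{ for all }X\in V\bigr\}.
\]
Hence $|\Supp(H)|=n-|D\cap V^{\perp}|$.

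Then I would invoke the non-degeneracy of the pairing $(X,Y)\mapsto \Tr(X\odot Y)$ on $\mathbb{F}$, which is clear since $\Tr_1^{s_i}$ is a non-degenerate bilinear pairing on each factor $\mathbb{F}_{p^{s_i}}$. This gives $\dim V^{\perp}=s-r$, and the map $V\mapsto V^{\perp}$ is an involutive bijection between $[\mathbb{F},r]_p$ and $[\mathbb{F},s-r]_p$. So as $V$ ranges over all $r$-dimensional subspaces and $H$ over all $r$-dimensional subcodes of $C_D$, $V^{\perp}$ ranges over all of $[\mathbb{F},s-r]_p$. Taking the minimum of $|\Supp(H)|$ on one side and the maximum of $|D\cap V^{\perp}|$ on the other yields \eqref{eq:d_r}.

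I do not expect a serious obstacle; the one point that needs care is verifying that $\varphi$ is injective (equivalently that $\dim C_D=s$, which is the standing hypothesis) and that the annihilator has the expected dimension, which reduces to the non-degeneracy of $\Tr_1^{s_i}$ on each component. Once these are in place the argument is a direct generalization of \cite[Proposition 1]{LL20}, with the only change being that the single field $\mathbb{F}_{p^s}$ is replaced by the product $\mathbb{F}=\prod_i\mathbb{F}_{p^{s_i}}$ equipped with the componentwise trace pairing.
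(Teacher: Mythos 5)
Your proof is correct, and its skeleton is the same as the paper's: identify $r$-dimensional subcodes of $C_D$ with $r$-dimensional subspaces $V\subseteq\mathbb{F}$ via the evaluation isomorphism, show that the non-support of such a subcode is $D\cap V^{\perp}$, and then use the duality $V\mapsto V^{\perp}$ (valid because the componentwise trace pairing is non-degenerate, so $\dim V+\dim V^{\perp}=s$ and $(V^{\perp})^{\perp}=V$) to replace the minimum over $[\mathbb{F},r]_p$ by a maximum over $[\mathbb{F},s-r]_p$. The one place where you genuinely diverge is the middle step: the paper establishes $N(U_r)=|D\cap H_r^{\perp}|$ by expanding the indicator of ``all $r$ traces vanish'' as a product of additive character sums and then collapsing $\sum_{\beta\in H_r}\zeta_p^{\Tr(\beta\odot Y_i)}$ via orthogonality, whereas you simply observe directly that a coordinate $i$ lies outside $\Supp(H)$ if and only if $Y_i\in V^{\perp}$. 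Your direct count is shorter and more transparent for this purely set-theoretic statement; the paper's character-sum formulation costs a few extra lines here but matches the exponential-sum machinery used throughout the rest of the paper (Lemmas 2.2 and 2.4 and the weight computations), so nothing is lost either way. Both routes rest on the same two verifications you flag: injectivity of the evaluation map (the standing hypothesis $\dim C_D=s$) and non-degeneracy of $\Tr(X\odot Y)$ on $\mathbb{F}$.
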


\begin{proof}
Let $ \sigma$ be a map from $ \mathbb{F} $ to $ \mathbb{F}_{p}^{n}$ defined by
$$
\sigma(X)=\Big(\Tr (X\odot Y_1),\Tr (X\odot Y_2),\cdots,\Tr (X\odot Y_n)\Big).
$$
Obviously, $\sigma$ is an $\F_p$-linear homomorphism and the image $\sigma(\mathbb{F})$ is $C_{D}$. By the assumed condition, that is, the dimension of $ C_{D} $
is $s$, we know that $\sigma$ is injective. Taking any $r$-dimensional subspace $ U_{r} \in [C_{D}, r]_{p}$, denote by $H_{r} = \sigma^{-1}(U_{r})$
the inverse image of $U_{r}$, then $H_{r}$ is an $r$-dimensional
subspace of $\F$. Let $ \{ \beta_{1},\beta_{2},\ldots ,\beta_{r}\}$ be an $\mathbb{F}_{p}$-basis of $ H_{r}$, we have
$$
d_{r}(C_{\D})= n-\max\{N(U_{r}): U_{r} \in [C_{\D}, r]_{p}\},
$$
where
\begin{align*}
N(U_{r})&=\Big|\{i: 1\leq i \leq n, c_{i}=0\ \ \textrm{for any\ } C=(c_{1},c_{2}, \ldots, c_{n}) \in U_{r}\}\Big|  \\
&=\Big|\{i: 1\leq i \leq n, \Tr (X\odot Y_i)=0\ \ \textrm{for any\ } X \in H_{r}\}\Big| \\
&=\Big|\{i: 1\leq i \leq n, \Tr(\beta_j \odot Y_{i})=0, j =1,2,\cdots,r\}\Big|.
\end{align*}
Then, by the orthogonal property of additive characters, we have
\begin{align}\label{N(U)}
N(U_{r})
&=\frac{1}{p^{r}}\sum_{i=1}^{n}\sum_{x_{1}\in \mathbb{F}_{p}}\zeta_{p}^{x_{1}\Tr\big((\beta_{1}\odot Y_{i})\big)}\ldots \sum_{x_{r}\in \mathbb{F}_{p}}\zeta_{p}^{{x_{r}\Tr\big((\beta_{r}\odot Y_{i})\big)}} \nonumber \\
&=\frac{1}{p^{r}}\sum_{i=1}^{n}\sum_{x_{1},\ldots,x_{r}\in \mathbb{F}_{p}}\zeta_{p}^{\mathrm{Tr}\big( (x_1\beta_{1}+\ldots+x_{r}\beta_{r})\odot Y_{i}\big)}    \nonumber\\
&=\frac{1}{p^{r}}\sum_{i=1}^{n}\sum_{\beta\in H_{r}}\zeta_{p}^{\mathrm{Tr}(\beta\odot Y_{i})}.
\end{align}

For $H_r$,
let $
H_r^{\perp}=\{X\in \mathbb{F}: \mathrm{Tr}(X \odot Y)=0 \ \textrm{for any}\ Y \in H_r\}$,
by Theorem 2.37 \cite{16LN97}, one can prove that $ \dim_{\mathbb{F}_{p}}(H_r)+\dim_{\mathbb{F}_{p}}(H_r^{\bot})=s$ and $(H_r^{\bot})^{\bot}=H_r$. So,
for any $Y\in \mathbb{F}$,
\begin{equation}\label{H}
\sum_{\beta\in H_{r}}\zeta_{p}^{\Tr(\beta\odot Y)}=\left\{\begin{array}{ll}
|H_{r}|, & \textrm{if\ } \ Y \in H_{r}^{\bot}, \\
0, & \textrm{otherwise\ }.
\end{array}
\right.
\end{equation}
Combining \eqref{N(U)} with \eqref{H}, we have
$$
N(U_{r})
=\frac{1}{p^{r}}\sum_{Y\in D \cap H_{r}^{\bot}} |H_{r}|=|D \cap H_{r}^{\bot}|.
$$
Finally, we get the desired result, which follows from the fact that there is a bijection
between $[\mathbb{F},r]_{p}$ and $[\mathbb{F},s-r]_{p}$.

\end{proof}

\subsection{Some notation fixed throughout this paper}\label{subsec:notation}
For convenience, we fix the following notation. For basic results on cyclotomic field $ \mathbb{Q}(\zeta_{p}) $, one is referred to \cite{IR90}.
\begin{itemize}
\item Let $\mathrm{Tr}_1^s$ be the trace function from $\mathbb{F}_{q}$ to $\mathbb{F}_{p}$.
Namely, for each $x\in \mathbb{F}_{q}$,
$$
\mathrm{Tr}_1^s(x)=x+x^{p}+ \cdots +x^{p^{e-1}}.
$$
\item $p^{\ast}=(-1)^{\frac{p-1}{2}}p$.
\item $\zeta_{p}=\exp(\frac{2\pi i}{p})$ is a primitive $p$-th root of unity.
\item $\upsilon$ is the integer-valued function on $\F_p$ satisfying $\upsilon(0)=p-1$ and $\upsilon(z)=-1$ for $z\in \mathbb{F}_{p}^{\ast}$.
\item $\bar{\eta}$ is the quadratic character of $\mathbb{F}_{p}^{\ast}$.
It is extended by letting $\bar{\eta}(0)=0$.
\item Let $\mathbb{Z}$ be the rational integer ring and $\mathbb{Q}$ be the rational field. Let $\mathbb{K}$ be the cyclotomic
field $\mathbb{Q}(\zeta_{p})$. The field extension $\mathbb{K}/\mathbb{Q}$ is Galois of degree $p-1$.
The Galois group $\mathrm{Gal}(\mathbb{K}/\mathbb{Q})=\Big\{\sigma_{z}: z\in (\mathbb{Z}/ p\mathbb{Z})^{\ast}\Big\}$, where the automorphism
$\sigma_{z}$ is defined by $\sigma_{z}(\zeta_{p})=\zeta_{p}^{z}$.
\item $\sigma_{z}(\sqrt{p^{\ast}})=\bar{\eta}(z)\sqrt{p^{\ast}}$, for $1\leq z \leq p-1$.
\item Let $\Big\langle\alpha_{1},\alpha_{2},\cdots,\alpha_{r}\Big\rangle$ denote a space spanned by $\alpha_{1},\alpha_{2},\cdots,\alpha_{r}$.
\end{itemize}

\begin{lemma}[{\cite[Lemma 4]{19TXF17}}]\label{lem:7}
 With the symbols and notation above, for any $z\in \mathbb{F}_{p}$, we have the following.

$$
\sum\limits_{y\in \mathbb{F}_{p}^{\ast}}\sigma_{y}((p^{\ast})^{\frac{r}{2}}\zeta_{p}^{z})
=\left\{\begin{array}{ll}
\bar{\eta}(-z)p^r(p^{\ast})^{-\frac{r-1}{2}}, & \textrm{if\ } r  \ \textrm{is odd\ }, \\
\upsilon(z)p^r(p^{\ast})^{-\frac{r}{2}},  & \textrm{if\ } r \ \textrm{is even\ }.
\end{array}
\right.
$$

\end{lemma}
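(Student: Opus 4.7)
The plan is to push the Galois automorphism $\sigma_y$ inside the product $(p^{\ast})^{r/2}\zeta_{p}^{z}$, use the given action $\sigma_y(\sqrt{p^{\ast}})=\bar{\eta}(y)\sqrt{p^{\ast}}$, and then recognize the resulting sum over $y$ as either a character orthogonality identity or a quadratic Gauss sum, depending on the parity of $r$.

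First I would record that $\sigma_y$ acts trivially on $\mathbb{Q}$ and that $\sigma_y(\zeta_{p}^{z})=\zeta_{p}^{yz}$, so $\sigma_y((p^{\ast})^{r/2})=(p^{\ast})^{r/2}$ when $r$ is even and $\sigma_y((p^{\ast})^{r/2})=\bar{\eta}(y)(p^{\ast})^{r/2}$ when $r$ is odd. Accordingly the sum to evaluate factors as
\[
(p^{\ast})^{r/2}\sum_{y\in\mathbb{F}_{p}^{\ast}}\zeta_{p}^{yz}\qquad\text{or}\qquad
(p^{\ast})^{r/2}\sum_{y\in\mathbb{F}_{p}^{\ast}}\bar{\eta}(y)\zeta_{p}^{yz}.
\]

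In the even case, character orthogonality gives $\sum_{y\in\mathbb{F}_{p}^{\ast}}\zeta_{p}^{yz}=\upsilon(z)$, and because $p$ is odd and $r$ is even one has $(p^{\ast})^{r}=p^{r}$, hence $(p^{\ast})^{r/2}=p^{r}(p^{\ast})^{-r/2}$, and the formula follows directly. In the odd case, the inner sum vanishes for $z=0$, matching $\bar{\eta}(0)=0$ on the right-hand side; for $z\neq 0$, the substitution $y\mapsto y/z$ extracts a factor $\bar{\eta}(z)$ and reduces the inner sum to the quadratic Gauss sum $\sum_{y}\bar{\eta}(y)\zeta_{p}^{y}=\sqrt{p^{\ast}}$. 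This produces $\bar{\eta}(z)(p^{\ast})^{(r+1)/2}$, which I would then reshape via the identity $(p^{\ast})^{r}=(-1)^{r(p-1)/2}p^{r}=\bar{\eta}(-1)p^{r}$, valid since $r$ is odd, to arrive at $\bar{\eta}(-z)p^{r}(p^{\ast})^{-(r-1)/2}$.

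The only external input is the classical evaluation of the quadratic Gauss sum; everything else is algebraic bookkeeping driven by $\sigma_y(\sqrt{p^{\ast}})=\bar{\eta}(y)\sqrt{p^{\ast}}$. The main obstacle is keeping track of the powers of $p^{\ast}$ versus $p$ and locating the sign $\bar{\eta}(-1)=(-1)^{(p-1)/2}$ absorbed from $(p^{\ast})^{r}$; this sign is precisely what converts the leading $\bar{\eta}(z)$ produced by the Gauss sum substitution into the $\bar{\eta}(-z)$ appearing in the statement.
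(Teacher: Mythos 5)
Your proof is correct, and it is the standard argument: the paper itself states this lemma as a quotation of \cite[Lemma 4]{19TXF17} without reproducing a proof, and your route (pushing $\sigma_y$ through $(p^{\ast})^{r/2}\zeta_p^z$, splitting on the parity of $r$, and invoking character orthogonality in the even case and the quadratic Gauss sum $\sum_{y}\bar{\eta}(y)\zeta_p^{y}=\sqrt{p^{\ast}}$ in the odd case) is exactly how the cited source proves it. The sign bookkeeping via $(p^{\ast})^{r}=\bar{\eta}(-1)p^{r}$ for odd $r$, which turns $\bar{\eta}(z)$ into $\bar{\eta}(-z)$, is handled correctly.
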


\subsection{Quadratic form}

Viewing $\mathbb{F}_{q}$ with $q=p^s$ as an $\mathbb{F}_{p}$-linear space and fixing $\upsilon_{1},\upsilon_{2},\cdots,\upsilon_{s} \in \mathbb{F}_{q} $ as its $\F_p$-basis, then
for any $x=x_{1}\upsilon_{1}+x_{2}\upsilon_{2}+\cdots+x_{s}\upsilon_{s} \in \mathbb{F}_{q}$ with $x_i\in\F_p, i=1,2,\cdots,s$,
there is an $\F_p$-linear isomorphism $\F_q\simeq\F_p^s$ defined as:
$$x=x_{1}\upsilon_{1}+x_{2}\upsilon_{2}+\cdots+x_{s}\upsilon_{s} \mapsto X=(x_1,x_2,\cdots,x_s),$$
where $X=(x_1,x_2,\cdots,x_s)$ is called the coordinate vector of $x$ under the basis $v_1,v_2,\cdots,v_s$ of $\F_q$.

Let $f$ be a quadratic form over $\mathbb{F}_{q}$ with values in $\mathbb{F}_{p}$ and \begin{equation*}\label{eq:F}
    F(x,y)=\frac{1}{2}\Big(f(x+y)-f(x)-f(y)\Big), \textrm{for any}\  x,y\in\mathbb{F}_q,
\end{equation*} then $f$ can be represented by
\begin{align}\label{eq:f}
f(x)=f(X)=f(x_{1},x_{2},\cdots,x_{s})&=\sum_{1\leq i,j\leq s}F(v_i,v_j)x_{i}x_{j}
=XAX^T,
\end{align}
where $A=(F(v_i,v_j))_{s\times s}, F(v_i,v_j)\in \mathbb{F}_{p}, F(v_i,v_j)=F(v_j,v_i)$ and $X^T$ is the transposition of $X$. Denote by $R_f=\textrm{Rank}\ A$ the rank of $f$, we say that $f$ is non-degenerate if $R_f=s$ and degenerate, otherwise.
For any $x,y \in\F_q$, $x,y$ can be uniquely expressed as $x=x_{1}\upsilon_{1}+x_{2}\upsilon_{2}+\cdots+x_{s}\upsilon_{s}$ and $y=y_{1}\upsilon_{1}+y_{2}\upsilon_{2}+\cdots+y_{s}\upsilon_{s}$  with $x_i,y_i\in\F_p$.
Hence, we have
\begin{align}\label{eq:F}
    F(x,y)&=\frac{1}{2}\Big(f(x+y)-f(x)-f(y)\Big) \nonumber\\
    &=\frac{1}{2}\Big(f(X+Y)-f(X)-f(Y)\Big)\nonumber\\
    &=XAY^T=YAX^T\nonumber\\
&=\Tr_1^s\Big(xL_{f}(y)=\Tr_1^s\Big(yL_{f}(x)\Big)\Big),
\end{align}
where $L_{f}$ is a linearized polynomial over $\mathbb{F}_q$
defined as
\begin{eqnarray}\label{eq:L_f}
        L_{f}(\upsilon_{1},\upsilon_{2},\cdots,\upsilon_{s})=(\upsilon_{1},\upsilon_{2},\cdots,\upsilon_{s})\Big(\Tr_1^s(v_iv_j)\Big)^{-1}A.
\end{eqnarray}
Let $\mathrm{Im}(L_{f})=\Big\{L_{f}(x):x\in\F_q\Big\},\ \Ker(L_{f})=\Big\{x\in\F_q:L_{f}(x)=0\Big\}$ denote the image and kernel of $L_{f}$, respectively.
If $b\in \mathrm{Im}(L_{f})$, we denote $x_{b}\in \mathbb{F}_q$ with satisfying $L_{f}(x_{b})=-\frac{b}{2}$.

From eq. \eqref{eq:F} and \eqref{eq:L_f}, we have
\begin{equation*}
\Ker(L_{f})=\{x\in \mathbb{F}_q:f(x+y)=f(x)+f(y), \textrm{for any}\  y\in\mathbb{F}_q\}
\end{equation*}
and $\textrm {rank}\ L_f = R_f$.

The following lemma will play an important role in settling the weight distributions and weight hierarchies.
For more details, one is referred to \cite{19TXF17}.

\begin{lemma}[{\cite[Lemma 5]{19TXF17}}]\label{lem:6}
 Let the symbols and notation be as above and $f$ be defined in \eqref{eq:f} and $b\in \mathbb{F}_{q}$. Then
$$
\sum\limits_{x\in \mathbb{F}_{q}}\zeta_{p}^{f(x)-\mathrm{Tr}_1^s(bx)}
=\left\{\begin{array}{ll}
0, & \textrm{if\ } b\notin \mathrm{Im}(L_{f}), \\
\varepsilon_{f}(p^{\ast})^{\frac{R_{f}}{2}}p^{s-R_f}\zeta_{p}^{-f(x_{b})},  & \textrm{if\ } b\in \mathrm{Im}(L_{f}).
\end{array}
\right.
$$
where $x_{b}$ satisfies $L_{f}(x_{b})=-\frac{b}{2}$.
\end{lemma}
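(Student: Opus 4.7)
My approach is to split into two cases according to whether $b \in \mathrm{Im}(L_{f})$ or not. In the first case I would complete the square via the shift $x = u - x_{b}$, reducing the sum to a pure quadratic Gauss sum that is then evaluated by diagonalizing $f$ over $\mathbb{F}_{p}$. In the second case I would compute $|S|^{2}$ via the standard difference trick and show it vanishes by exploiting the self-adjointness of $L_{f}$ with respect to the trace pairing.

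For the case $b \in \mathrm{Im}(L_{f})$: pick $x_{b}$ with $L_{f}(x_{b}) = -b/2$ and substitute $x = u - x_{b}$. Expanding $f(u - x_{b}) = f(u) + f(x_{b}) - 2F(u, x_{b})$ and using $F(u, x_{b}) = \Tr_{1}^{s}(u L_{f}(x_{b})) = -\tfrac{1}{2}\Tr_{1}^{s}(bu)$ from \eqref{eq:F}, the linear-in-$u$ term produced by the quadratic part cancels the $-\Tr_{1}^{s}(bu)$ coming from the linear piece. The residual constant simplifies to $-f(x_{b})$ because $f(x_{b}) = F(x_{b}, x_{b}) = \Tr_{1}^{s}(x_{b} L_{f}(x_{b})) = -\tfrac{1}{2}\Tr_{1}^{s}(bx_{b})$, so that $f(x_{b}) + \Tr_{1}^{s}(bx_{b}) = -f(x_{b})$. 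This yields
\[ S := \sum_{x \in \mathbb{F}_{q}} \zeta_{p}^{f(x) - \Tr_{1}^{s}(bx)} = \zeta_{p}^{-f(x_{b})} \sum_{u \in \mathbb{F}_{q}} \zeta_{p}^{f(u)}. \]
To finish, diagonalize the symmetric matrix $A$ from \eqref{eq:f} over $\mathbb{F}_{p}$ (possible since $p$ is odd), writing $f(u) = a_{1} w_{1}^{2} + \cdots + a_{R_{f}} w_{R_{f}}^{2}$ in suitable coordinates $(w_{1},\ldots, w_{s})$. The $s - R_{f}$ absent variables each contribute a factor of $p$, while each $w_{i}$-sum is a classical quadratic Gauss sum $\bar{\eta}(a_{i})\sqrt{p^{\ast}}$. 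Setting $\varepsilon_{f} = \bar{\eta}(a_{1} \cdots a_{R_{f}})$ then produces exactly $\varepsilon_{f} (p^{\ast})^{R_{f}/2} p^{s - R_{f}}$, as claimed.

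For the case $b \notin \mathrm{Im}(L_{f})$: compute $|S|^{2}$ by setting $u = x - y$, which reduces the inner $y$-sum to $\sum_{y \in \mathbb{F}_{q}} \zeta_{p}^{2 \Tr_{1}^{s}(y L_{f}(u))}$, equal to $q$ when $L_{f}(u) = 0$ and $0$ otherwise (using $p$ odd). Restricting to $u \in \Ker(L_{f})$ forces $f(u) = \Tr_{1}^{s}(u L_{f}(u)) = 0$, so
\[ |S|^{2} = q \sum_{u \in \Ker(L_{f})} \zeta_{p}^{-\Tr_{1}^{s}(bu)}. \]
This character sum vanishes unless $\Tr_{1}^{s}(bu) = 0$ for every $u \in \Ker(L_{f})$, i.e.\ unless $b \in \Ker(L_{f})^{\perp}$ with respect to the trace pairing. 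The symmetry of $F$ makes $L_{f}$ self-adjoint for this pairing, whence $\Ker(L_{f})^{\perp} = \mathrm{Im}(L_{f})$; consequently $b \notin \mathrm{Im}(L_{f})$ forces $|S|^{2} = 0$ and so $S = 0$.

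The main obstacle I anticipate is not any single hard step but rather careful bookkeeping of the factors of $2$ in the square-completion identity and the verification that the residual constant collapses to $-f(x_{b})$; the self-adjointness $\Ker(L_{f})^{\perp} = \mathrm{Im}(L_{f})$ and the diagonalization of a symmetric matrix over $\mathbb{F}_{p}$ are entirely standard and should slot in without difficulty.
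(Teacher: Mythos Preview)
Your argument is correct: the square-completion via $x=u-x_b$ is carried out cleanly (the cancellation of the linear term and the collapse to the constant $-f(x_b)$ are exactly right), the diagonalized Gauss sum gives the stated value, and the $|S|^2$ computation together with the self-adjointness identity $(\Ker L_f)^\perp=\mathrm{Im}(L_f)$ handles the vanishing case.

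There is nothing to compare against here: the paper does not prove this lemma but imports it verbatim as \cite[Lemma~5]{19TXF17}. Your write-up is precisely the standard proof one finds in that reference (completing the square plus the self-adjointness of $L_f$), so in effect you have reconstructed the cited argument rather than found an alternative route.
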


Let $f$ be a quadratic form over $\mathbb{F}_{q}$ defined by eq.\eqref{eq:f}. We can find an invertible matrix $M$ over $\F_p$ such that
$$
MAM^T=\textrm{diag}(\lambda_{1},\lambda_{2},\cdots,\lambda_{R_{f}},0,\cdots,0)
$$
is a diagonal matrix, where $\lambda_{1},\lambda_{2},\cdots,\lambda_{R_{f}}\in\F_p^*$.
Let $\Delta_{f}=\lambda_{1}\lambda_{2}\cdots\lambda_{R_{f}}$, and $\Delta_{f}=1$ if $R_{f}=0.$
We call $\bar{\eta}(\Delta_{f})$ the sign $\varepsilon_{f}$ of the quadratic form $f$.
It is an invariant under nonsingular linear transformations in matrix.

For an $r$-dimensional subspace $H$ of $\F_q$, its dual space $H^{\perp_f}$ is defined by
$$
H^{\perp_f}=\Big\{x\in \mathbb{F}_{q}:\ F(x,y)=0, \ \mbox{for  any} \  y \in H\Big\}.
$$
It's well known that if $f$ is non-degenerate then
\begin{equation}\label{dim:H}\dim (H) +\dim (H^{\perp_f}) = s. \end{equation}
For $H=\F_q$, it is known that $R_f+\dim(\F_q^{\perp_f})=s$, whether $f$ is non-degenerate or not.
So $R_f$ can also be defined as the codimension of $\F_q^{\perp_f}$.
Restricting the quadratic form $f$ to $H$,
it becomes a quadratic form denoted by $f|_{H}$ over $H$ in $r$ variables. Similarly, we define the dual space $H_{f|_{H}}^{\perp_f}$ of $H$
under $f|_{H}$ in itself by
$$
H_{f|_{H}}^{\perp_f}=\Big\{x\in H:\ F(x,y)=0, \ \mbox{for  any} \  y \in H\Big\}.
$$
Let $R_{H}$ and $\varepsilon_{H}$ be the rank and sign of $f|_{H}$ over $H$, respectively. Obviously, $H_{f|_{H}}^{\perp_f}=H\bigcap H^{\perp_f}$ and
\begin{equation}\label{rk:H}R_H=r-\dim(H_{f|_{H}}^{\perp_f}). \end{equation}

 For $\beta\in \mathbb{F}_{p}$, set $D_{\beta}=\Big\{x\in \mathbb{F}_{q}|f(x)=\beta\Big\}$,
we shall give some lemmas, which are essential to prove our main
results.

\begin{lemma}[{\cite[Lemma 2]{LL20-0}}]\label{lem:1}
Let $f$ be a quadratic form over $\mathbb{F}_{q}, \beta\in \mathbb{F}_{p}$ and $H$ be an $r$-dimensional nonzero subspace of $\mathbb{F}_{q}$, then
$$
|H\cap D_{\beta}|=\left\{\begin{array}{ll}
p^{r-1}+\upsilon(\beta)\overline{\eta}((-1)^{\frac{R_{H}}{2}})\varepsilon_{H}p^{r-\frac{R_{H}+2}{2}},  &\textrm{if\ } \ R_{H}\equiv0\pmod 2, \\
p^{r-1}+\overline{\eta}((-1)^{\frac{R_{H}-1}{2}}\beta)\varepsilon_{H}p^{r-\frac{R_{H}+1}{2}},  &\textrm{if\ } \ R_{H}\equiv1\pmod 2,
\end{array}
\right.
$$
where $\upsilon(\beta)=p-1$ if $\beta=0$, otherwise $\upsilon(\beta)=-1$.
\end{lemma}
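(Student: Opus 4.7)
The plan is to reduce $|H\cap D_\beta|$ to a Gauss-type sum via additive-character orthogonality, evaluate the inner character sum by applying Lemma~\ref{lem:6} to the restriction $f|_H$, and then recognize the outer sum over $y\in\mathbb{F}_p^*$ as a Galois sum covered by Lemma~\ref{lem:7}.

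First I would write
\[
|H\cap D_\beta|=\frac{1}{p}\sum_{x\in H}\sum_{y\in\mathbb{F}_p}\zeta_p^{y(f(x)-\beta)}=p^{r-1}+\frac{1}{p}\sum_{y\in\mathbb{F}_p^*}\zeta_p^{-y\beta}\sum_{x\in H}\zeta_p^{yf(x)},
\]
so the $y=0$ contribution already accounts for the leading $p^{r-1}$ in both branches of the statement, and everything remaining sits in the $y\neq 0$ terms.

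Next, I would view $f|_H$ as a quadratic form on the $r$-dimensional $\mathbb{F}_p$-space $H$ with rank $R_H$ and sign $\varepsilon_H$. Scaling its Gram matrix by $y\in\mathbb{F}_p^*$ preserves the rank and multiplies the product of the nonzero eigenvalues by $y^{R_H}$, hence $\varepsilon_{yf|_H}=\overline{\eta}(y)^{R_H}\varepsilon_H$. Applying Lemma~\ref{lem:6} to $yf|_H$ with $b=0$ (the lemma uses only the $\mathbb{F}_p$-vector-space structure together with the quadratic form, so it transfers to $H$) then yields
\[
\sum_{x\in H}\zeta_p^{yf(x)}=\overline{\eta}(y)^{R_H}\,\varepsilon_H\,(p^*)^{R_H/2}\,p^{r-R_H}.
\]
Using $\sigma_y(\zeta_p^{-\beta})=\zeta_p^{-y\beta}$ and $\sigma_y(\sqrt{p^*})=\overline{\eta}(y)\sqrt{p^*}$, the correction term then collapses to
\[
\varepsilon_H\,p^{r-R_H-1}\sum_{y\in\mathbb{F}_p^*}\sigma_y\!\bigl((p^*)^{R_H/2}\zeta_p^{-\beta}\bigr),
\]
which is exactly the shape evaluated by Lemma~\ref{lem:7}.

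Finally, I would split by the parity of $R_H$, feed $r=R_H$ and $z=-\beta$ into Lemma~\ref{lem:7} (noting $\upsilon(-\beta)=\upsilon(\beta)$ and $\overline{\eta}(-(-\beta))=\overline{\eta}(\beta)$), and simplify via $(p^*)^{k}=\overline{\eta}((-1)^{k})p^{k}$ for integer $k$ to absorb the residual sign into a single $\overline{\eta}$-factor; a short bookkeeping calculation then reproduces the two closed forms stated. I expect the only genuinely delicate step to be this sign accounting—in particular confirming $\varepsilon_{yf|_H}=\overline{\eta}(y)^{R_H}\varepsilon_H$ and correctly rewriting $(p^*)^{\pm R_H/2}$ in terms of $\overline{\eta}((-1)^{\lfloor R_H/2\rfloor})$—because any slip here would shift the $\overline{\eta}$-factor that appears in the final answer.
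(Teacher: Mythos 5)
Your proposal is correct: the orthogonality expansion, the evaluation of $\sum_{x\in H}\zeta_p^{yf(x)}$ via Lemma~\ref{lem:6} applied to $yf|_H$ with $b=0$ (using $\varepsilon_{yf|_H}=\overline{\eta}(y)^{R_H}\varepsilon_H$), and the reduction of the outer sum to Lemma~\ref{lem:7} with $z=-\beta$ all check out, and the sign bookkeeping via $(p^*)^{k}=\overline{\eta}((-1)^{k})p^{k}$ reproduces both branches exactly. The paper itself gives no proof of this lemma (it is imported by citation from \cite{LL20-0}), but your argument is precisely the standard one and mirrors how the paper deploys Lemmas~\ref{lem:6} and~\ref{lem:7} in its own analogous computations, e.g.\ in the proof of Lemma~\ref{lem:wt}.
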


\begin{lemma}\label{lem:2}
Let $f$ be a quadratic form over $\mathbb{F}_{q}$ with the rank $R_f$. There exists an $e_0$-dimensional subspace
$H $ of $ \mathbb{F}_{q}$ such that $\F_q^{\perp_f}\subseteq H$ and $f(x)=0$ for any $x\in H$, where

$
e_0=\left\{\begin{array}{ll}
s-\frac{R_f+1}{2}, & \textrm{if $R_f$ is odd}, \\
s-\frac{R_f}{2}, & \textrm{if $R_f$ is even and $\varepsilon_{f}=(-1)^{\frac{R_f(p-1)}{4}}$},\\
s-\frac{R_f+2}{2}, & \textrm{if $R_f$ is even and $\varepsilon_{f}=-(-1)^{\frac{R_f(p-1)}{4}}$}.
\end{array}
\right.
$
\end{lemma}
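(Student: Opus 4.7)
The plan is to decompose $\mathbb{F}_q$ into its radical $\mathbb{F}_q^{\perp_f}$ and a complementary non-degenerate piece, then adjoin to $\mathbb{F}_q^{\perp_f}$ a maximal totally isotropic subspace of that complement. First I would pick any $\mathbb{F}_p$-subspace $V\subseteq\mathbb{F}_q$ with $\mathbb{F}_q=\mathbb{F}_q^{\perp_f}\oplus V$. Since $R_f$ is the codimension of $\mathbb{F}_q^{\perp_f}$, we have $\dim_{\mathbb{F}_p}V=R_f$, and $f|_V$ is non-degenerate of rank $R_f$ with the same sign $\varepsilon_{f}$ as $f$ (a diagonalization of $f|_V$ extends to one of $f$ by appending zero entries for the radical, leaving $\Delta_f$ unchanged modulo squares).

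Suppose I can produce a totally isotropic subspace $W\subseteq V$, i.e.\ $f(w)=0$ for every $w\in W$. Set $H=\mathbb{F}_q^{\perp_f}\oplus W$; clearly $\mathbb{F}_q^{\perp_f}\subseteq H$ and $\dim H=(s-R_f)+\dim W$. For any $u\in\mathbb{F}_q^{\perp_f}$ and $w\in W$,
\[
f(u+w)=f(u)+2F(u,w)+f(w)=0+0+0=0,
\]
since $F(u,\cdot)\equiv 0$ by definition of the radical. Thus $f$ vanishes identically on $H$, and the problem reduces to maximizing $\dim W$, i.e.\ to computing the Witt index of the non-degenerate form $f|_V$.

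To compute this Witt index I would diagonalize $f|_V$ over $\mathbb{F}_p$ as $\lambda_1 y_1^2+\cdots+\lambda_{R_f}y_{R_f}^2$ and decompose it into hyperbolic planes plus at most one anisotropic summand via the standard Witt decomposition. A binary piece $\lambda y_1^2+\mu y_2^2$ is a hyperbolic plane iff $-\lambda\mu$ is a square in $\mathbb{F}_p^*$, i.e.\ $\bar{\eta}(-\lambda\mu)=1$, in which case the line spanned by a nonzero isotropic vector contributes $1$ to $\dim W$. Regrouping the $\lambda_i$'s appropriately yields $(R_f-1)/2$ hyperbolic planes plus one anisotropic coordinate when $R_f$ is odd; $R_f/2$ hyperbolic planes when $R_f$ is even and $\varepsilon_{f}=(-1)^{R_f(p-1)/4}$; and $R_f/2-1$ hyperbolic planes plus an anisotropic binary piece when $R_f$ is even and $\varepsilon_{f}=-(-1)^{R_f(p-1)/4}$. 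Choosing $W$ to be the span of one isotropic line from each hyperbolic plane produces a totally isotropic subspace of the desired dimension, and $\dim H=(s-R_f)+\dim W=e_0$ in each case.

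The main obstacle is the last step: verifying that the sign condition $\varepsilon_{f}=\pm(-1)^{R_f(p-1)/4}$ exactly distinguishes the hyperbolic case from the non-hyperbolic one when $R_f$ is even. This rests on the classification of non-degenerate quadratic forms over $\mathbb{F}_p$ by rank together with discriminant modulo squares, combined with the identity $(\bar{\eta}(-1))^{R_f/2}=(-1)^{R_f(p-1)/4}$, which ensures that the discriminant of an orthogonal sum of $R_f/2$ hyperbolic planes is $(-1)^{R_f/2}$ modulo squares, so its sign equals $(-1)^{R_f(p-1)/4}$. Once this matching is established, the three cases of $e_0$ follow immediately.
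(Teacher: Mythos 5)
Your proposal is correct and follows essentially the same route as the paper: both reduce to the non-degenerate part by splitting off the radical $\F_q^{\perp_f}$ (you via a direct complement $V$, the paper via the quotient $\F_q/\F_q^{\perp_f}$) and then adjoin a maximal totally isotropic subspace of that part. The only difference is that the paper simply cites Propositions 2 and 3 of the reference for the Witt-index computation in the three cases, whereas you derive it directly from the Witt decomposition and the identity $\bar{\eta}(-1)^{R_f/2}=(-1)^{R_f(p-1)/4}$, making the argument self-contained.
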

\begin{proof} If $f$ is non-degenerate, the desired conclusion directly from Proposition 2 and Proposition 3 \cite{{19LF17}}.
If $f$ is degenerate, then the induced quadratic form $\overline{f}$ of $f$ is non-degenerate  over $\overline{\F}_{q}=\F_{q}\Big/\F_{q}^{\perp_f}$. Applying Proposition 2 and Proposition 3 \cite{{19LF17}} to $\overline{f}$, we can obtain a subspace $\overline{H}$ of $\overline{\F}_{q}$, which satisfies the condition of this lemma in $\overline{\F}_{q}$. Let $\varphi$ be the canonical map from $\F_{q}$ onto $\overline{\F}_{q}$, then the inverse image of $\overline{H}$ in $\F_{q}$ is the desired $H$.
\end{proof}

\begin{lemma}\label{lem:2-2}
Let $f$ be a quadratic form over $\mathbb{F}_{q}$ with the rank $R_f$. For each $\beta\in \F_p^*$, there exists an $l_0$-dimensional subspace
$H$ of $ \mathbb{F}_{q}$ such that $R_H=1, \varepsilon_{H}=\bar{\eta}(\beta)$ and $H \cap\ \F_q^{\perp_f}=\{0\}$, where $
l_0=\left\{\begin{array}{ll}
\frac{R_f-1}{2}, & \textrm{if $R_f$ is odd}, \\
\frac{R_f}{2}, & \textrm{if $R_f$ is even }.
\end{array}
\right.
$
\end{lemma}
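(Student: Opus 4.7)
The plan is to first reduce to the non-degenerate case as in the proof of Lemma \ref{lem:2}, and then construct $H$ directly from a Witt-type decomposition of the reduced form.

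For the reduction, since $f(x+y) = f(x)$ whenever $y \in \F_q^{\perp_f}$, the form $f$ descends to a non-degenerate quadratic form $\bar{f}$ of rank $R_f$ on the quotient $\bar{\F}_q := \F_q/\F_q^{\perp_f}$. Fix any $\F_p$-linear section $s\colon \bar{\F}_q \to \F_q$ of the canonical map: then $s$ is an isometry onto its image with respect to the induced bilinear form, so any subspace $\bar{H} \subset \bar{\F}_q$ lifts to $H := s(\bar{H}) \subset \F_q$ satisfying $H \cap \F_q^{\perp_f} = \{0\}$ automatically (because $s$ is injective) and $R_H = R_{\bar H}$, $\varepsilon_H = \varepsilon_{\bar H}$. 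It therefore suffices to construct $\bar H$ in the non-degenerate case, so I now assume $f$ is non-degenerate and $s = R_f$.

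Using Witt's theorem (or the propositions from \cite{19LF17} already invoked in Lemma \ref{lem:2}), the Witt index $\mu$ of $f$ equals $(R_f-1)/2$ if $R_f$ is odd, $R_f/2$ if $R_f$ is even with $\varepsilon_f = (-1)^{R_f(p-1)/4}$, and $R_f/2 - 1$ if $R_f$ is even with the opposite sign. In every case $\mu \geq l_0 - 1$, so I pick a totally isotropic subspace $T$ of dimension exactly $l_0 - 1$. Its orthogonal complement $T^{\perp_f}$ has dimension $R_f - (l_0 - 1)$, and $T^{\perp_f}/T$ carries a non-degenerate residual quadratic form of rank $R_f - 2(l_0-1) \in \{2, 3\}$: rank $3$ in the odd case, a rank-$2$ isotropic form in the even plus-sign case, and a rank-$2$ anisotropic form in the even minus-sign case.

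The crucial step is that in all three sub-cases this residual form is universal over $\F_p$, i.e., it represents every $\beta \in \F_p^*$. For the rank-$3$ form and the isotropic rank-$2$ form this is elementary, and for the anisotropic rank-$2$ form it follows directly from Lemma \ref{lem:1} applied with $r = R_H = 2$, which yields $|\{x : Q(x) = \beta\}| = p - \bar{\eta}(-1)\varepsilon_Q > 0$. Choosing $v \in T^{\perp_f}$ with $f(v) = \beta$ (note $v \notin T$ since $f|_T \equiv 0$) and setting $\bar H := T \oplus \langle v \rangle$, a subspace of dimension $l_0$, the Gram matrix of $f|_{\bar H}$ in a basis of $T$ augmented by $v$ is $\mathrm{diag}(0, \ldots, 0, \beta)$, so $R_{\bar H} = 1$ and $\varepsilon_{\bar H} = \bar{\eta}(\beta)$, as required. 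The main technical obstacle is the universality step in the anisotropic binary case; every other part of the argument is direct bookkeeping.
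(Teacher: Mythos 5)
Your proposal is correct and follows essentially the same route as the paper: reduce to the non-degenerate form on $\F_q/\F_q^{\perp_f}$, take a totally isotropic subspace of dimension $l_0-1$, use Lemma \ref{lem:1} to find a vector in its orthogonal complement representing $\beta$, adjoin it, and lift back. The only cosmetic difference is that you phrase the key step as universality of the residual form on $T^{\perp_f}/T$, whereas the paper applies Lemma \ref{lem:1} directly to the full orthogonal complement $\overline{H}_{l_0-1}^{\perp_{\bar f}}$; the content is identical.
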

\begin{proof} We only prove the case of $f$ degenerate. Let $\bar{f}$ be the induced quadratic form of $f$ in $\overline{\F}_{q_1}$, then $\bar{f}$ is non-degenerate over $\overline{\F}_{q}$ with rank $\varepsilon_{\bar{f}} = R_f$. Applying Proposition 2 \cite{{19LF17}} to $\bar{f}$, there is an $(l_0-1)$-dimensional subspace $\overline{H}_{l_0-1}\subset \overline{\F}_q$ such that $\overline{H}_{l_0-1}\subset \overline{H}_{l_0-1}^{\perp_{\bar{f}}}$. By \eqref{dim:H} and \eqref{rk:H},
we have $\dim(\overline{H}_{l_0-1}^{\perp_{\bar{f}}})=R_f-(l_0-1)$ and
\begin{align*}
   R_{\overline{H}_{l_0-1}^{\perp_{\bar{f}}}}&=\dim(\overline{H}_{l_0-1}^{\perp_{\bar{f}}})-\dim(\overline{H}_{\bar{f}|_{\overline{H}_{l_0-1}^{\perp_{\bar{f}}}}}^{\perp_{\bar{f}}})\\
   &=R_f-(l_0-1)-\dim(\overline{H}_{l_0-1}^{\perp_{\bar{f}}}\bigcap (\overline{H}_{l_0-1}^{\perp_{\bar{f}}})^{\perp_{\bar{f}}})\\
   &=R_f-2(l_0-1)\geq 2.
\end{align*}
Applying Lemma \ref{lem:1} to $\overline{H}_{l_0-1}^{\perp_{\bar{f}}}$, we have $|\overline{H}_{l_0-1}^{\perp_{\overline{f}}}\cap \overline{D}_{\beta}| >1$, where $\overline{D}_{\beta}=\Big\{\overline{x}\in \overline{\mathbb{F}}_{q}|\overline{f}(\bar{x})=\beta\Big\}$. We choose an element $\overline{\gamma}\in \overline{H}_{l_0-1}^{\perp_{\bar{f}}}\cap \overline{D}_{\beta}$ and define
$\overline{H}=<\overline{\gamma}>+\overline{H}_{l_0-1}$, then $\dim(\overline{H})=l_0, R_{\overline{H}}=1$ and $\varepsilon_{\overline{H}}=\bar{\eta}(\beta)$.
Taking $\overline{\gamma}_1,\overline{\gamma}_2,\cdots,\overline{\gamma}_{l_0}$ of $\overline{H}$ as its $\F_p$- basis, define $H=\Big\langle\gamma_1,\gamma_2,\cdots,\gamma_{l_0}\Big\rangle$, then the $H$ is our desired subspace.
\end{proof}

\section{Linear codes from inhomogeneous quadratic functions}

In this section, we fix the following notation. Let $s_i$ be positive integers, $q_i= p^{s_i}, i=1,2$. Denote $\mathbb{F}=\mathbb{F}_{q_1}\times\mathbb{F}_{q_2},\mathbb{F}^\star=\mathbb{F}\Big\backslash\{(0,0)\}$ and $s=s_1+s_2$. We study the weight distribution and weight hierarchy of $C_{D}$ in \eqref{defcode1}, where their defining set are
\begin{align}\label{set:D1}
D=\Big\{(x,y)\in \mathbb{F}^\star: f(x)+\Tr_1^{s_2}(\alpha y)=0\Big\},
\end{align}
where $\alpha\in\mathbb{F}_{q_2}^*$ and $f(x)$ is a quadratic form over $\mathbb{F}_{q_1}$ defined in \eqref{eq:f}.

\subsection{The weight distribution of the presented linear code}

In this subsection, we first calculate the length of $C_{D}$ defined in \eqref{defcode1} and the Hamming weight of non-zero codewords of $ C_{D}$.

\begin{lemma}\label{lem:length}
Let $D$ be defined as above and $C_{D}$ be defined in \eqref{defcode1}, define $n=|D|$. Then,
$$n=p^{s-1}-1.$$
\end{lemma}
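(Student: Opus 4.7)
The plan is to use orthogonality of the additive characters of $\mathbb{F}_p$ to count the pairs $(x,y)\in\mathbb{F}$ satisfying the defining equation, then subtract one to remove $(0,0)$.

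First I would write
\begin{align*}
|D|+1 &= \Big|\big\{(x,y)\in\mathbb{F} : f(x)+\mathrm{Tr}_1^{s_2}(\alpha y)=0\big\}\Big| \\
&= \frac{1}{p}\sum_{(x,y)\in\mathbb{F}}\sum_{z\in\mathbb{F}_p}\zeta_p^{z\bigl(f(x)+\mathrm{Tr}_1^{s_2}(\alpha y)\bigr)},
\end{align*}
using the standard orthogonality identity (the inner sum over $z$ is $p$ if the exponent vanishes and $0$ otherwise). The $+1$ on the left accounts for the pair $(0,0)$, which lies in the solution set but not in $D$.

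Next I would separate the $z=0$ contribution, which gives $q_1q_2=p^s$, and handle the remaining terms by factoring the sum over $(x,y)$:
\begin{equation*}
\sum_{z\in\mathbb{F}_p^*}\Bigg(\sum_{x\in\mathbb{F}_{q_1}}\zeta_p^{zf(x)}\Bigg)\Bigg(\sum_{y\in\mathbb{F}_{q_2}}\zeta_p^{z\,\mathrm{Tr}_1^{s_2}(\alpha y)}\Bigg).
\end{equation*}
The key observation is that for any $z\in\mathbb{F}_p^*$ we have $z\alpha\in\mathbb{F}_{q_2}^*$ since $\alpha\neq 0$, so the inner sum over $y$ equals $\sum_{y\in\mathbb{F}_{q_2}}\zeta_p^{\mathrm{Tr}_1^{s_2}(z\alpha y)}=0$ by orthogonality of the trace character. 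Hence the entire $z\neq 0$ contribution vanishes, irrespective of the quadratic form $f$ (degenerate or not).

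Combining these, $|D|+1 = p^s/p = p^{s-1}$, and therefore $n=p^{s-1}-1$, as claimed. There is no real obstacle here: the presence of the linear term $\mathrm{Tr}_1^{s_2}(\alpha y)$ with $\alpha\neq 0$ kills every nontrivial character sum, which is exactly why the length is independent of the structure of $f$.
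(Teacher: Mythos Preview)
Your proof is correct and follows essentially the same approach as the paper: both use orthogonality of additive characters, separate the $z=0$ term to get $p^{s-1}$, and observe that for $z\in\mathbb{F}_p^*$ the inner sum $\sum_{y\in\mathbb{F}_{q_2}}\zeta_p^{\mathrm{Tr}_1^{s_2}(z\alpha y)}$ vanishes because $z\alpha\neq 0$. Your write-up is slightly more explicit about why the nontrivial contribution dies (and that it is independent of $f$), but the argument is identical.
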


\begin{proof}
By the orthogonal property of additive characters, we have
\begin{align*}
n
&=\frac{1}{p}\sum_{(x,y)\in \mathbb{F}}\sum_{u\in \mathbb{F}_{p}}\zeta_{p}^{u\big(f(x)+\Tr_1^{s_2}(\alpha y)\big)}-1  \\
&=\frac{1}{p}\sum_{(x,y)\in \mathbb{F}}\Big(1+\sum_{u\in \mathbb{F}_{p}^{\ast}}\zeta_{p}^{u\big(f(x)+\Tr_1^{s_2}(\alpha y)\big)}\Big)-1  \\
&=p^{s-1}+\frac{1}{p}\sum_{u\in \mathbb{F}_{p}^{\ast}}\sum_{y\in \mathbb{F}_{q_2}}\zeta_{p}^{\Tr_1^{s_2}(u\alpha y)}\sum_{x\in \mathbb{F}_{q_1}}\zeta_{p}^{uf(x)}-1\\
&=p^{s-1}-1.
\end{align*}
Thus, the desired conclusion is obtained.
\end{proof}


\begin{lemma}\label{lem:wt}
Let $\alpha\in \mathbb{F}_{q_2}^{\ast}$ and $D$ be defined in \eqref{set:D1} and $C_{D}$ be defined in \eqref{defcode1}. Let $\c_{(u,v)}$ be the corresponding codeword in $C_{D}$ with non-zero element $(u,v)\in \F$. We have the following.
\begin{enumerate}
\item [(1)] When $v\in \mathbb{F}_{q_2}\setminus \mathbb{F}_{p}^*\alpha$, we have $\wt(c_{(u,v)})=p^{s-2}(p-1)$.
\item [(2)] When $v\in \mathbb{F}_{p}^{\ast}\alpha$, we have the following four cases.
\begin{enumerate}
  \item [(2.1)] If $u\notin \textrm{Im}(L_f)$, then $\wt(c_{(u,v)})=p^{s-2}(p-1)$.

  \item [(2.2)] If $u\in \textrm{Im}(L_f)$, then
$$
\wt(c_{(u,v)})
=\left\{\begin{array}{ll}
p^{s-2}\Big(p-1-\varepsilon_{f}\bar{\eta}(f(x_{u}))(p^{\ast})^{-\frac{R_f-1}{2}}\Big), & \textrm{if\ } 2\nmid R_f , \\
p^{s-2}\Big(p-1-v(f(x_{u}))\varepsilon_{f}(p^{\ast})^{-\frac{R_f}{2}}\Big),  & \textrm{if\ } 2|R_f  .
\end{array}
\right.
$$
\end{enumerate}
\end{enumerate}
\end{lemma}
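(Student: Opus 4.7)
The plan is to compute $\wt(\c_{(u,v)})$ by the standard additive-character trick and then to evaluate the resulting twisted Gauss sums using Lemma \ref{lem:6} and Lemma \ref{lem:7}. I would set $N_1 := |\{(x,y)\in\F : f(x)+\Tr_1^{s_2}(\alpha y)=0 \text{ and } \Tr_1^{s_1}(ux)+\Tr_1^{s_2}(vy)=0\}|$; since the origin satisfies both equations but is removed from $D$, Lemma \ref{lem:length} gives $\wt(\c_{(u,v)})=n-(N_1-1)=p^{s-1}-N_1$. Orthogonality of additive characters then yields
\[
N_1=\frac{1}{p^2}\sum_{a,b\in\F_p}\left(\sum_{x\in\F_{q_1}}\zeta_p^{af(x)+b\Tr_1^{s_1}(ux)}\right)\left(\sum_{y\in\F_{q_2}}\zeta_p^{\Tr_1^{s_2}((a\alpha+bv)y)}\right),
\]
and the inner $y$-sum equals $q_2$ when $a\alpha+bv=0$ and vanishes otherwise, so the contributing indices are governed by whether $v$ lies on the $\F_p$-line through $\alpha$.

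For Case (1), $v\in\F_{q_2}\setminus\F_p^*\alpha$, the surviving pairs are either $(0,0)$ alone (when $v\neq 0$) or the pairs $(0,b)$, $b\in\F_p$ (when $v=0$, in which case $u\neq 0$). In both sub-situations the orthogonality $\sum_{x}\zeta_p^{b\Tr_1^{s_1}(ux)}=0$ for $b\neq 0$, $u\neq 0$, collapses the calculation to $N_1=q_1q_2/p^2=p^{s-2}$, giving $\wt(\c_{(u,v)})=p^{s-2}(p-1)$ as claimed.

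For Case (2), $v=c\alpha$ with $c\in\F_p^*$, the constraint $a\alpha+bv=0$ forces $a=-bc$, so
\[
N_1=\frac{q_2}{p^2}\left[q_1+\sum_{b\in\F_p^*}S_b\right],\qquad S_b=\sum_{x\in\F_{q_1}}\zeta_p^{h_b(x)-\Tr_1^{s_1}((-bu)x)},
\]
with $h_b(x):=-bcf(x)$. Since $h_b=-bc\cdot f$, the associated linearized polynomial is $-bcL_f$; consequently $\textrm{Im}(L_{h_b})=\textrm{Im}(L_f)$, $R_{h_b}=R_f$, and $\varepsilon_{h_b}=\bar{\eta}((-bc)^{R_f})\varepsilon_f$. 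Lemma \ref{lem:6} shows $S_b=0$ whenever $u\notin\textrm{Im}(L_f)$, which establishes Subcase (2.1); when $u\in\textrm{Im}(L_f)$ the element $x_u/c$ satisfies $L_{h_b}(x_u/c)=bu/2$ and $h_b(x_u/c)=-bc^{-1}f(x_u)$, so
\[
S_b=\varepsilon_{h_b}(p^\ast)^{R_f/2}p^{s_1-R_f}\zeta_p^{bc^{-1}f(x_u)}.
\]

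Summing these $S_b$ over $b\in\F_p^*$ is precisely the setting of Lemma \ref{lem:7}. For even $R_f$, $\varepsilon_{h_b}=\varepsilon_f$ is independent of $b$ and the even-rank branch of that lemma produces a factor $\upsilon(c^{-1}f(x_u))=\upsilon(f(x_u))$; for odd $R_f$, the factor $\bar{\eta}(-bc)$ combines with the Galois action $\sigma_b((p^\ast)^{R_f/2})=\bar{\eta}(b)(p^\ast)^{R_f/2}$ and the odd-rank branch yields $\bar{\eta}(-c)\bar{\eta}(-c^{-1}f(x_u))=\bar{\eta}(f(x_u))$. Substituting these evaluations back into $N_1$ and subtracting from $p^{s-1}$ delivers the two formulas of Subcase (2.2). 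The main obstacle is exactly this sign bookkeeping in the odd-rank case: one has to verify that the three $c$-dependent scalar factors, coming from $\varepsilon_{h_b}$, from $\sigma_b$, and from the exponent $bc^{-1}f(x_u)$, cancel so that the final answer depends only on $f(x_u)$ and not on the representative $c$ chosen on the line $\F_p^*\alpha$.
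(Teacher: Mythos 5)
Your proposal is correct and follows essentially the same route as the paper: orthogonality of additive characters reduces the weight to the quantity $N_1=|N(u,v)|$, the $y$-sum isolates the line $\F_p^*\alpha$, and the surviving Weil sums are evaluated with Lemma \ref{lem:6} and Lemma \ref{lem:7} under the same case split on $v$ and on $u\in\Im(L_f)$. The only cosmetic difference is that you absorb the scalar $-bc$ into a rescaled form $h_b$ and track how $\varepsilon$, $R$ and $\Im(L)$ transform, whereas the paper absorbs it into the Galois automorphism $\sigma_{z_1}$ acting on a single Gauss sum; the $c$-dependent signs you flag do cancel exactly as you claim.
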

\begin{proof}
Put
$N(u,v)=\Big\{(x,y)\in \F: f(x)+\Tr_1^{s_2}(\alpha y)=0, \Tr((u,v)\odot (x,y)) = 0\Big\}$, then the Hamming weight of $\c_{(u,v)}$
is $n-|N(u,v)|+1$, where $n_1$ is given in Lemma \ref{lem:length}. Thus, we just need to evaluate the value of $|N(u,v)|$.

By the orthogonal property of additive characters, we have
\begin{align}\label{eq:N(u,v)}
&|N(u,v)|=\frac{1}{p^{2}}\sum_{(x,y)\in \mathbb{F}}\Big(\sum_{z_{1}\in \mathbb{F}_{p}}\zeta_{p}^{z_{1}f(x)+\Tr_1^{s_2}(z_{1}\alpha y)}\sum_{z_{2}\in \mathbb{F}_{p}}\zeta_{p}^{z_{2}\mathrm{Tr}((u,v)\odot (x,y))}\Big)  \\ \nonumber
&=\frac{1}{p^{2}}\sum_{(x,y)\in \mathbb{F}}\Big(\big(1+\sum_{z_{1}\in \mathbb{F}_{p}^{\ast}}\zeta_{p}^{z_{1}f(x)+\Tr_1^{s_2}(z_{1}\alpha y)}\big)\big(1+\sum_{z_{2}\in\mathbb{F}_{p}^{\ast}}\zeta_{p}^{z_{2}\mathrm{Tr}((u,v)\odot (x,y))}\big)\Big)  \\ \nonumber
&=p^{s-2}+\frac{1}{p^{2}}\sum_{z_{1}\in \mathbb{F}_{p}^{\ast}}\sum_{(x,y)\in \mathbb{F}}\zeta_{p}^{z_{1}f(x)+\Tr_1^{s_2}(z_{1}\alpha y)}+ \frac{1}{p^{2}}\sum_{z_{2}\in \mathbb{F}_{p}^{\ast}}\sum_{(x,y)\in \mathbb{F}}\zeta_{p}^{z_{2}\mathrm{Tr}((u,v)\odot (x,y))} \\ \nonumber
&+\frac{1}{p^{2}}\sum_{z_{1}\in \mathbb{F}_{p}^{\ast}}\sum_{z_{2}\in \mathbb{F}_{p}^{\ast}}\sum_{(x,y)\in \mathbb{F}}\zeta_{p}^{z_{1}f(x)+\Tr_1^{s_1}(z_{2}ux)+\Tr_1^{s_2}(z_{2}vy+z_{1}\alpha y)}  \\  \nonumber
&=p^{s-2}+p^{-2}\sum_{z_{1}\in \mathbb{F}_{p}^{\ast}}\sum_{z_{2}\in \mathbb{F}_{p}^{\ast}}\sum_{y\in \mathbb{F}_{q_1}}\zeta_{p}^{\Tr_1^{s_2}(z_{2}vy+z_{1}\alpha y)}\sum_{x\in \mathbb{F}_{q_1}}\zeta_{p}^{z_{1}f(x)+\Tr_1^{s_1}(z_{2}ux)}.
\end{align}

(1) When $v\in \mathbb{F}_{q_2}\setminus \mathbb{F}_{p}^{\ast}\alpha$, the desired conclusion then follows from $\sum_{y\in \mathbb{F}_{q_2}}\zeta_{p}^{\Tr_1^{s_2}(z_{2}vy+z_{1}\alpha y)}=0$.

(2) When $v\in \mathbb{F}_{p}^{\ast}\alpha$, i.e., $\alpha=zv$ for some $z\in \mathbb{F}_{p}^{\ast}$, \eqref{eq:N(u,v)} becomes
\begin{align}\label{eq:N(u,v):2}
|N(u,v)|
&=p^{s-2}+p^{s_2-2}\sum_{z_{1}\in \mathbb{F}_{p}^{\ast}}\sum_{x\in \mathbb{F}_{q_1}}\zeta_{p}^{z_{1}f(x)-z_{1}\Tr_1^{s_1}(zux)}\nonumber\\
&=p^{s-2}+p^{s_2-2}\sum_{z_{1}\in \mathbb{F}_{p}^{\ast}}\sigma_{z_{1}}\Big(\sum_{x\in \mathbb{F}_{q_1}}\zeta_{p}^{f(x)-\Tr_1^{s_1}(zux)}\Big).
\end{align}

(2.1) If $u\notin \textrm{Im}(L_f)$, by Lemma~\ref{lem:6}, we have $|N(u,v)|=p^{s-2}$, the desired conclusion of (2.1) is obtained.

(2.2) If $u\in \textrm{Im}(L_f)$,
define $c=zu$, we have $x_{c}=zx_{u}$. By Lemma~\ref{lem:6}, \eqref{eq:N(u,v):2} becomes
\begin{align}
|N(u,v)|
&=p^{s-2}+p^{s_2-2}\sum_{z_{1}\in \mathbb{F}_{p}^{\ast}}\sigma_{z_{1}}\Big(\varepsilon_{f}(p^{\ast})^{\frac{R_{f}}{2}}p^{s_1-R_f}\zeta_{p}^{-f(x_{c})}\Big) \nonumber \\
&=p^{s-2}+p^{s-2-R_f}\varepsilon_{f}\sum_{z_{1}\in \mathbb{F}_{p}^{\ast}}\sigma_{z_{1}}\Big((p^{\ast})^{\frac{R_{f}}{2}}\zeta_{p}^{-f(x_{c})}\Big). \nonumber
\end{align}
The conclusion  of (2.2) follows directly from Lemma~\ref{lem:7}.

\end{proof}



\begin{theorem}\label{thm:wd} Let $\alpha \in \F_{q_2}^*$ and $f$ be a homogeneous quadratic function defined in \eqref{eq:f}. Let $D$ be defined in \eqref{set:D1} and the code $C_{D}$ be defined in \eqref{defcode1}. Write $s=s_1+s_2$.
Then the code $C_{D}$ is a $[p^{s-1}-1,s]$ linear code over $ \mathbb{F}_{p} $
with the weight distribution in Tables 1 and 2.
\begin{table}\label{tab:wd:o}
\centering
\caption{The weight distribution of $C_{D}$ of Theorem \ref{thm:wd} when $R_f$ is odd}
\begin{tabular*}{10.5cm}{@{\extracolsep{\fill}}ll}
\hline
\textrm{Weight} $\omega$ & \textrm{Multiplicity} $A_\omega$   \\
\hline
0 &   1  \\
$(p-1)p^{s-2}$ &  $p^{s}-p^{R_f-1}(p-1)^2-1$  \\
$p^{s-2}\Big(p-1-p^{-\frac{R_f-1}{2}}\Big)$  & $\frac{1}{2}(p-1)^2p^{R_f-1}(1+p^{-\frac{R_f-1}{2}})$  \\
$p^{s-2}\Big(p-1+p^{-\frac{R_f-1}{2}}\Big)$  & $\frac{1}{2}(p-1)^2p^{R_f-1}(1-p^{-\frac{R_f-1}{2}})$  \\
\hline
\end{tabular*}
\end{table}
\begin{table}
\centering
\caption{The weight distribution of $C_{D}$ of Theorem \ref{thm:wd} when $R_f$ is even}
\begin{tabular*}{10.5cm}{@{\extracolsep{\fill}}ll}
\hline
\textrm{Weight} $\omega$ \qquad& \textrm{Multiplicity} $A_\omega$   \\
\hline
0 \qquad&   1  \\
$p^{s-2}(p-1)$ \qquad&  $p^{s}-p^{R_f}(p-1)-1$  \\
$p^{s-2}(p-1)\Big(1-\varepsilon_{f}(p^*)^{-\frac{R_f}{2}}\Big)$  \qquad& $(p-1)p^{R_f-1}\Big(1+\varepsilon_{f}(p-1)(p^*)^{-\frac{R_f}{2}}\Big)$  \\
$p^{s-2}(p-1+\varepsilon_{f}(p^*)^{-\frac{R_f}{2}})$  \qquad& $(p-1)^{2}p^{R_f-1}\Big(1-\varepsilon_{f}(p^*)^{-\frac{R_f}{2}}\Big)$  \\
\hline
\end{tabular*}
\end{table}
\end{theorem}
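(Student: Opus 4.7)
The plan is to combine the length calculation of Lemma~\ref{lem:length}, the codeword-weight computation of Lemma~\ref{lem:wt}, and the fiber-counting Lemma~\ref{lem:1} to deduce the dimension and complete the weight distribution at the same time.

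First I would establish that $\dim C_D = s$. The $\mathbb{F}_p$-linear map $\sigma:\mathbb{F}\to\mathbb{F}_p^n$, $(u,v)\mapsto c_{(u,v)}$, has image $C_D$, so it suffices to show $\wt(c_{(u,v)})>0$ for every $(u,v)\in\mathbb{F}^\star$. Lemma~\ref{lem:wt}(1) and (2.1) both give the weight $p^{s-2}(p-1)>0$, and in Lemma~\ref{lem:wt}(2.2) the weights are $p^{s-2}(p-1\pm p^{-(R_f-1)/2})$ or $p^{s-2}(p-1\pm\varepsilon_f(p^*)^{-R_f/2})$; since $p\geq 3$ and the correction term has absolute value at most $1$, these are strictly positive. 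Hence $\sigma$ is injective and $\dim C_D=s$. The length $n=p^{s-1}-1$ has already been recorded in Lemma~\ref{lem:length}.

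Next I would count multiplicities. Cases (1) and (2.1) together contribute $q_1(q_2-(p-1))+(p-1)(q_1-p^{R_f})$ pairs of the common weight $p^{s-2}(p-1)$, which rearranges to the bulk term $p^{s}-p^{R_f-1}(p-1)^2-1$ or $p^{s}-p^{R_f}(p-1)-1$ in Tables 1 and 2 respectively. In case (2.2) the weight of $c_{(u,v)}$ depends only on $f(x_u)\in\mathbb{F}_p$, and this value is well-defined: for any $y\in\Ker(L_f)$ one has $F(z,y)=\Tr_1^{s_1}(zL_f(y))=0$, hence $f(z+y)=f(z)+f(y)$; taking $z=-y$ and using $f(-y)=f(y)$ forces $f(y)=0$ on $\Ker(L_f)$ when $p$ is odd, so $f(x_u)$ is independent of the chosen preimage. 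The map $x\mapsto -2L_f(x)$ is a linear surjection $\mathbb{F}_{q_1}\twoheadrightarrow\mathrm{Im}(L_f)$ with kernel of size $p^{s_1-R_f}$, so for each $\beta\in\mathbb{F}_p$
$$
M_\beta:=\bigl|\{u\in\mathrm{Im}(L_f):f(x_u)=\beta\}\bigr|=\frac{|D_\beta|}{p^{s_1-R_f}},
$$
where $D_\beta=\{x\in\mathbb{F}_{q_1}:f(x)=\beta\}$ is evaluated by Lemma~\ref{lem:1} applied with $H=\mathbb{F}_{q_1}$, $r=s_1$, $R_H=R_f$, $\varepsilon_H=\varepsilon_f$. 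Multiplying $M_\beta$ by the $p-1$ admissible choices of $z\in\mathbb{F}_p^*$ (via $v=z\alpha$) and grouping the $p$ values of $\beta$ according to $\bar\eta(\beta)$ when $R_f$ is odd, or according to $\beta=0$ versus $\beta\in\mathbb{F}_p^*$ when $R_f$ is even, reproduces the multiplicities listed in Tables 1 and 2.

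The main obstacle I anticipate is the sign bookkeeping. I need to convert $(p^*)^{-R_f/2}$ (which hides a factor of $(-1)^{(p-1)R_f/4}$) into the plain $p^{-R_f/2}$ appearing in Table 2, and to combine $\bar\eta$ with $\varepsilon_f$ so that the two non-trivial weights in each table carry the displayed coefficients. The parity split between Tables 1 and 2 parallels the parity splits in both Lemmas~\ref{lem:1} and~\ref{lem:wt}(2.2), and a final sanity check that the three multiplicities in each table sum to $p^s-1$ will catch any sign mishap.
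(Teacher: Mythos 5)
Your proposal follows essentially the same route as the paper's proof: dimension $s$ via positivity of all nonzero weights from Lemma~\ref{lem:wt}, length from Lemma~\ref{lem:length}, and multiplicities by pushing the count down to the fibers of $f(x_u)$ over $\mathbb{F}_p$ using Lemma~\ref{lem:1} with $H=\mathbb{F}_{q_1}$ and dividing by $|\Ker(L_f)|=p^{s_1-R_f}$ (the paper writes this as $p^{s_1-1}/p^{s_1-R_f}$ in the odd case), with your explicit check that $f$ vanishes on $\Ker(L_f)$ being a detail the paper leaves implicit. One small bookkeeping correction: in the odd case the bulk multiplicity is \emph{not} produced by cases (1) and (2.1) alone --- you must subtract $1$ for the excluded pair $(0,0)$ and add the $(p-1)p^{R_f-1}$ pairs from case (2.2) with $f(x_u)=0$ (since $\bar{\eta}(0)=0$ makes that weight equal to $(p-1)p^{s-2}$) --- but your subsequent grouping of the $\beta$-values by $\bar{\eta}(\beta)$ handles this automatically.
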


\begin{proof}
By Lemma \ref{lem:wt}, we know that, for $(u,v)\in\F^\star$, we have $\wt(\c_{(u,v)})>0$. So, the map: $\F\rightarrow C_{D}$ defined by $(u,v)\mapsto \c_{(u,v)} $
is an isomorphism in linear spaces over $\F_p$. Hence, the dimension of the code $C_{D}$ in \eqref{defcode1} is equal to $s$. By Lemma \ref{lem:length}, we proved that the code $C_{D}$ is a $[p^{s-1}-1,s]$ linear code over $ \mathbb{F}_{p} $.

Now we shall prove the multiplicities $A_{\omega_{i}}$ of codewords with weight $\omega_{i}$ in $C_{D}$. Let us give the proofs of two cases, respectively.

(1)  The case that $R_f$ is odd.

For each $(u,v)\in \F$ and $ (u,v) \neq (0,0)$. By Lemma \ref{lem:length} and Lemma \ref{lem:wt},
$\wt(c_{(u,v)})$ has only three values, that is,
$$
\left\{\begin{array}{ll}
\omega_{1}=(p-1)p^{s-2}, \  \\
\omega_{2}=p^{s-2}\Big(p-1-p^{-\frac{R_f-1}{2}}\Big), \  \\
\omega_{3}=p^{s-2}\Big(p-1+p^{-\frac{R_f-1}{2}}\Big).
\end{array}
\right.
$$

By Lemma \ref{lem:wt} and Lemma \ref{lem:1}, we have
\begin{align*}\label{eq:1}
  &A_{\omega_{1}} = \Big|\Big\{(u,v)\in\F|\wt(c_{(u,v)}) = (p-1)p^{s-2}\Big\}\Big|  \\
  &= \Big|\Big\{(u,v)\in\F^\star|u\in\F_{q_1},v\in\F_{q_2}\setminus\F_p^*\alpha\Big\}\Big|+\Big|\Big\{(u,v)\in\F|\\
  &u\notin\textrm{Im}(L_f),v\in\F_p^*\alpha\Big\}\Big|+\Big|\Big\{(u,v)\in\F|v\in\F_p^*\alpha, f(x_u)=0\Big\}\Big|\\
&=\Big(q_1(q_2-(p-1))-1\Big)+(p-1)+(q_1-p^{R_f})(p-1)\\
&+(p-1)\Big|\{\bar{x}\in\F_{q_1}/\Ker L_f|f(x)=0\}\Big|\\
&=\Big(q_1(q_2-(p-1))-1\Big)+(q_1-p^{R_f})(p-1)+(p-1)\frac{p^{s_1-1}}{p^{s_1-R_f}}\\
&=p^{s}-p^{R_f-1}(p-1)^2-1,
\end{align*}
where we use the fact that the number of solutions of the equation $f(x)=0$ in $\F_q$ is $p^{s_1-1}$ and the dimension of the kernel $L_f$ is $s_1-R_f$.

Similarly, the values of $A_{\omega_{2}}$ and $A_{\omega_{3}}$ can be calculated. This completes the proof of the weight distribution of Table 1.

 (2) The case that $R_f$ is even.

 The proof is similar to that of Case (1) and we omit it here. The desired conclusion then follows from Lemma \ref{lem:length}, Lemma \ref{lem:wt} and Lemma \ref{lem:1}.

\end{proof}


\begin{example}
Let $(p,s_1,s_2,\alpha)=(3,4,1,1)$ and $f(x)=\mathrm{Tr}_1^{s_1}(x^{2})$, by Corollary 1 in \cite{19TXF17}, we have
$\varepsilon_{f}=-1$ and $R_f=4$. Then, the corresponding code $C_{D}$ has parameters $[ 80,5,51]$ and the weight enumerator
$1+120x^{51}+80x^{54}+42x^{60}$, which is verified by the Magma program.
\end{example}

\begin{example}
Let $(p,s_1,s_2,\alpha)=(3,4,1,1)$ and $f(x)=\mathrm{Tr}_1^{s_1}(\theta x^{2})$, where $\theta$ is a primitive element of $\mathbb{F}_{p^{s_1}}$. By Corollary 1 in \cite{19TXF17}, we have
$\varepsilon_{f}=1$ and $R_f=4$. Then, the corresponding code $C_{D}$ has parameters $[80,5,48]$ and the weight enumerator
$1+66x^{48}+80x^{54}+96x^{57}$, which is verified by the Magma program.
\end{example}

\begin{example}
Let $(p,s_1,s_2,\alpha)=(3,4,1,1)$ and $f(x)=\mathrm{Tr}_1^{s_1}( x^{2})-\frac{1}{4}(\mathrm{Tr}_1^{s_1}( x))^{2}$. By Corollary 2 in \cite{19TXF17}, we have
$\varepsilon_{f}=1$ and $R_f=3$. Then, the corresponding code $C_{D}$ has parameters $[80,5,45]$ and the weight enumerator
$1+24x^{45}+206x^{54}+12x^{63}$, which is verified by the Magma program.
\end{example}


\subsection{The weight hierarchy of the presented linear code}

In this subsection, we give the weight hierarchy of $C_{D}$ in \eqref{defcode1}.

By Theorem~\ref{thm:wd}, we know that the dimension of the code $C_{D}$ defined in \eqref{defcode1} is $s$.
So, by Proposition \ref{pro:d_r}, we give a general formula,
that is
\begin{align}
        d_{r}(C_{D})&=n-\max\Big\{|H_r^\perp\cap D|: H_r \in [\mathbb{F},r]_{p}\Big\}  \\
        &=n-\max\Big\{|H_{s-r}\cap \D|: H_{s-r} \in [\mathbb{F},s-r]_{p}\Big\}\label{eq:d_r:2},
\end{align}
here $H_r^\perp = \Big\{(x,y)\in\F:\Tr((x,y)\odot (u,v)) =0, \textrm{for any $(u,v)\in H_r$}\Big \}$.

Let $H_r$ be an $r$-dimensional subspace of $\mathbb{F}$, set $
N(H_r)=\Big\{(x,y)\in \mathbb{F}: f(x)+\mathrm{Tr}_1^{s_2}(\alpha y)=0, \mathrm{Tr}((x,y)\odot (u,v))=0, (u,v)\in H_r \Big\}$.
Then, $N(H_r) = (H_r^\perp\cap D)\cup\{(0,0)\}$, which concludes that $|N(H_r)|=|H_r^\perp\cap\D |+1$. Hence, we have
\begin{equation}\label{eq:d_r:3}
     d_{r}(C_{\D})=n+1-\max\Big\{N(H_r)|: H_r \in [\mathbb{F},r]_{p}\Big\}.
\end{equation}

\begin{lemma} \label{lem:d_r:2}
Let $\alpha \in\F_{q_2}^*$ and $f$ be a homogeneous quadratic function defined in \eqref{eq:f} with the sign $\varepsilon_f$ and the rank $R_f$. $H_r$ and $N(H_r)$ are defined as above. We have the following.
\begin{itemize}
  \item[(1)] If $\alpha\notin  \Prj_{2}(H_r)$, then $N(H_r)=p^{s-(r+1)}$.
  \item[(2)] If $\alpha\in  \Prj_{2}(H_r)$, then
  $$N(H_r)=p^{s-(r+1)}\Big(1+\varepsilon_{f}p^{-R_f}\sum\limits_{(u,-\alpha)\in H_r,u\in\Im L_f}\sum\limits_{z\in \F_p^{\ast}}\sigma_{z}\big((p^{\ast})^{\frac{R_f}{2}}\zeta_{p}^{-f(x_{u})}\big)\Big).$$
\end{itemize}
Here $ \Prj_{2}$ is the second projection from $\mathbb{F}$ to $\mathbb{F}_{q_2}$ defined by $(x,y)\mapsto y$.
\end{lemma}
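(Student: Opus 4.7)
The plan is to compute $N(H_r)$ via orthogonality of additive characters, exactly as in the proof of Lemma~\ref{lem:wt}, but now with $r+1$ linear constraints instead of $3$. Fix an $\F_p$-basis $(u_1,v_1),\ldots,(u_r,v_r)$ of $H_r$. Introducing dual variables $z_0,z_1,\ldots,z_r\in\F_p$ and writing $\tilde u=\sum_{j=1}^r z_j u_j$, $\tilde v=\sum_{j=1}^r z_j v_j$, I expand
\begin{align*}
N(H_r)=\frac{1}{p^{r+1}}\sum_{z_0,\ldots,z_r\in\F_p}\sum_{x\in\F_{q_1}}\zeta_p^{z_0 f(x)+\Tr_1^{s_1}(\tilde u x)}\sum_{y\in\F_{q_2}}\zeta_p^{\Tr_1^{s_2}((z_0\alpha+\tilde v)y)}.
\end{align*}
The $y$-sum is $q_2$ precisely when $\tilde v=-z_0\alpha$, and $0$ otherwise.

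\medskip

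\noindent\textbf{Case 1.} If $\alpha\notin\Prj_2(H_r)$, then $\tilde v=-z_0\alpha$ forces $z_0=0$ (else $\alpha\in\Prj_2(H_r)$), and then $\tilde v=0$. Combined with the $x$-sum, which is $q_1$ only when $\tilde u=0$, linear independence of the basis forces $z_1=\cdots=z_r=0$. This gives the single term $q_1 q_2/p^{r+1}=p^{s-(r+1)}$.

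\medskip

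\noindent\textbf{Case 2.} If $\alpha\in\Prj_2(H_r)$, the $z_0=0$ part contributes $p^s$ as before. For fixed $z_0=c\in\F_p^\ast$, the tuples $(z_1,\ldots,z_r)$ with $\tilde v=-c\alpha$ correspond bijectively (via the chosen basis) to elements $(u',-c\alpha)\in H_r$, with $u'=\tilde u$. Since $H_r$ is an $\F_p$-subspace, the reindexing $u'=cu$ gives a bijection between $\{u':(u',-c\alpha)\in H_r\}$ and $\{u:(u,-\alpha)\in H_r\}$. Applying the Galois-action trick
\[
\sum_{x\in\F_{q_1}}\zeta_p^{cf(x)+\Tr_1^{s_1}(u'x)}=\sigma_c\Bigl(\sum_{x\in\F_{q_1}}\zeta_p^{f(x)-\Tr_1^{s_1}(-c^{-1}u'\,x)}\Bigr),
\]
together with Lemma~\ref{lem:6}, the inner sum vanishes unless $-c^{-1}u'=-u\in\Im L_f$, i.e.\ $u\in\Im L_f$, and otherwise equals $\varepsilon_f p^{s_1-R_f}\sigma_c\bigl((p^\ast)^{R_f/2}\zeta_p^{-f(x_u)}\bigr)$. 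Here I use that the $x_b$ attached to $b=-u$ satisfies $L_f(x_b)=u/2=L_f(-x_u)$ and that $f$ is a quadratic form, so $f(x_b)=f(x_u)$.

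\medskip

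Putting the pieces together and multiplying by the $y$-sum value $q_2=p^{s_2}$, the $z_0\neq 0$ contribution to $N(H_r)$ is
\[
\frac{\varepsilon_f\,q_2\,p^{s_1-R_f}}{p^{r+1}}\sum_{(u,-\alpha)\in H_r,\,u\in\Im L_f}\sum_{c\in\F_p^\ast}\sigma_c\bigl((p^\ast)^{R_f/2}\zeta_p^{-f(x_u)}\bigr),
\]
and combining with the $z_0=0$ contribution $p^{s}/p^{r+1}=p^{s-(r+1)}$ and using $q_2 p^{s_1-R_f}/p^{s}=p^{-R_f}$ yields the stated formula. The main obstacle is the bookkeeping in Case~2: keeping the Galois action, the sign switch $x_b=-x_u$, and the reindexing $u'=cu$ all consistent so that the resulting sum is naturally indexed by $(u,-\alpha)\in H_r$ as in the statement, rather than by $(u',-c\alpha)$ with $c$ varying.
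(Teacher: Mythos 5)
Your proposal is correct and follows essentially the same route as the paper: expand $N(H_r)$ by orthogonality of additive characters, split off the $z_0=0$ term, use the $\F_p$-subspace structure of $H_r$ to reindex the surviving terms by $(u,-\alpha)\in H_r$, and evaluate the remaining Gauss-type sum via the Galois action $\sigma_c$ together with Lemma~\ref{lem:6} (including the observation that $x_{-u}=-x_u$ gives $f(x_{-u})=f(x_u)$). The only difference is cosmetic — you write the sum over $H_r$ through a basis and dual variables $z_1,\dots,z_r$ and reindex by $u'=cu$ after evaluating the $y$-sum, whereas the paper sums over $\y\in H_r$ directly and applies the scaling bijection $(u,v)\mapsto(zu,zv)$ beforehand.
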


\begin{proof}
By the orthogonal property of additive characters, we have
\begin{align*}
&p^{r+1}|N(H_r)|
=\sum_{\x=(x,y)\in \F}\sum_{z\in \F_p}\zeta_{p}^{z f(x)+\Tr_1^{s_2}(z(\alpha y))}\sum_{\y\in H_r}\zeta_{p}^{\Tr(\x\odot\y)} \\
&=\sum_{\x=(x,y)\in \F}\sum_{\y\in H_r}\zeta_{p}^{\Tr(\x\odot \y)}
+\sum_{\x=(x,y)\in \F}\sum_{z\in \F_p^{\ast}}\sum_{\y\in H_r}\zeta_{p}^{z f(x)+\Tr(\x\odot \y)+\Tr_1^{s_2}(z \alpha y)} \\
&=p^s+\sum\limits_{(u,v)\in H_r}\sum_{z\in \F_p^{\ast}}\sum_{x\in \F_{q_1}}\zeta_{p}^{z f(x)+z\Tr_1^{s_1}(ux)}\sum_{y\in \F_{q_2}}\zeta_{p}^{z\Tr_1^{s_2}(vy+\alpha y)}
\end{align*}

If $\alpha\notin  \Prj_{2}(H_r)$, then $N(H_r)=p^{s-r-1}$, which follows from $\sum_{y\in \mathbb{F}_{q_2}}\zeta_{p}^{z\Tr_1^{s_2}(vy+\alpha y)}=0$.

If $\alpha\in  \Prj_{2}(H_r)$, by Lemma~\ref{lem:6}, we have
\begin{align*}
p^{r+1}|N(H_r)|&=p^s+q_2\sum_{(u,-\alpha)\in H_r}\sum_{z\in \F_p^{\ast}}\sum_{x\in \F_{q_1}}\zeta_{p}^{z f(x)+z\Tr_1^{s_1}(ux)}    \\
&=p^s+q_2\sum_{(u,-\alpha)\in H_r}\sum_{z\in \F_p^{\ast}}\sigma_{z}\Big(\sum_{x\in \F_{q_1}}\zeta_{p}^{f(x)+\Tr_1^{s_1}(ux)}\Big)    \\
&=p^s+\varepsilon_{f}p^{s-R_f}\sum_{(u,-\alpha)\in H_r,u\in\Im L_f}\sum_{z\in \F_p^{\ast}}\sigma_{z}\Big((p^{\ast})^{\frac{R_f}{2}}\zeta_{p}^{-f(x_{u})}\Big).
\end{align*}

So, the desired result is obtained. Thus, we complete the proof.
\end{proof}

In the following, we shall determine the weight hierarchy of $C_{D}$ in \eqref{defcode1}  by calculating $N(H_r)$ in Lemma~\ref{lem:d_r:2} and $|H_{s-r}\cap D|$ in \eqref{eq:d_r:2}.

\begin{theorem}\label{thm:wh} Let $\alpha \in \F_{q_1}^*$ and $f$ be a homogeneous quadratic function defined in \eqref{eq:f} with the sign $\varepsilon_f$ and the rank $R_f\geq 3$. Let $D$ be defined in \eqref{set:D1} and the code $C_{D}$ be defined in \eqref{defcode1}. Write $s=s_1+s_2$ and define
$$
e_0=\left\{\begin{array}{ll}
s_1-\frac{R_f+1}{2}, & \textrm{if $R_f$ is odd}, \\
s_1-\frac{R_f}{2}, & \textrm{if $R_f$ is even and $\varepsilon_{f}=(-1)^{\frac{R_f(p-1)}{4}}$},\\
s_1-\frac{R_f+2}{2}, & \textrm{if $R_f$ is even and $\varepsilon_{f}=-(-1)^{\frac{R_f(p-1)}{4}}$}.
\end{array}
\right.
$$
Then we have the following.
\begin{itemize}
\item[(1)]
When $s_1-e_0+1 \leq r \leq s$, we have
$$
d_{r}(C_{\D})=p^{s-1}-p^{s-r}.
$$
\item[(2)] When $0< r\leq s_1-e_0$, we have
$$
d_{r}(C_{\D})=\left\{\begin{array}{ll}
p^{s-1}-p^{s-1-r}-p^{s-1-\frac{R_f+1}{2}},  \textrm{if $2\nmid R_f$\ }, \\
p^{s-1}-p^{s-1-r}-(p-1)p^{s-1-\frac{R_f+2}{2}},  \textrm{if $2\mid R_f$ and $\varepsilon_{f}=(-1)^{\frac{R_f(p-1)}{4}}$},\\
p^{s-1}-p^{s-1-r}-p^{s-1-\frac{R_f+2}{2}},  \textrm{if $2\mid R_f$ and $\varepsilon_{f}=-(-1)^{\frac{R_f(p-1)}{4}}$}.
\end{array}
\right.
$$
\end{itemize}
\end{theorem}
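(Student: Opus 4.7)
The plan is to apply equation~\eqref{eq:d_r:3}, namely $d_r(C_D)=n+1-\max\{|N(H_r)|:H_r\in[\mathbb{F},r]_p\}$, together with Lemma~\ref{lem:d_r:2}; the dual formulation $d_r(C_D)=n-\max|D\cap H_{s-r}|$ from~\eqref{eq:d_r:2} is used whenever it is cleaner. The argument splits at the threshold $r=s_1-e_0+1$.

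\textbf{Case (1), $s_1-e_0+1\le r\le s$.} I would construct an $(s-r)$-dimensional subspace $H_{s-r}$ contained in $D\cup\{(0,0)\}$. Lemma~\ref{lem:2} supplies an $e_0$-dimensional subspace $W_1\subseteq\mathbb{F}_{q_1}$ on which $f$ vanishes identically. Let $W_2=\Ker(\Tr_1^{s_2}(\alpha\,\cdot\,))\subseteq\mathbb{F}_{q_2}$; it has dimension $s_2-1$. Since $s-r\le e_0+s_2-1=\dim(W_1\times W_2)$, any $(s-r)$-dimensional subspace $H_{s-r}\subseteq W_1\times W_2$ works: every $(x,y)\in H_{s-r}$ satisfies $f(x)+\Tr_1^{s_2}(\alpha y)=0$, so $|D\cap H_{s-r}|=p^{s-r}-1$. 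Since $(0,0)\notin D$ this meets the absolute upper bound $|H_{s-r}|-1$, and gives $d_r=p^{s-1}-p^{s-r}$.

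\textbf{Case (2), $0<r\le s_1-e_0$.} Combining Lemma~\ref{lem:d_r:2}(2) with Lemma~\ref{lem:7} yields
\[
|N(H_r)|=p^{s-r-1}+c_f\,p^{s-r-1}\sum_{u\in T\cap\Im L_f}\chi(u),
\]
where $T=\{u:(u,-\alpha)\in H_r\}$ is an $(r-1)$-dimensional affine subspace of $\mathbb{F}_{q_1}$, $\chi(u)=\bar\eta(f(x_u))$ when $R_f$ is odd, $\chi(u)=\upsilon(-f(x_u))$ when $R_f$ is even, and $c_f$ is an explicit sign constant coming from $\varepsilon_f(p^*)^{\pm R_f/2}$. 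The trivial bounds $|\chi|\le p-1$ and $|T\cap\Im L_f|\le p^{r-1}$, together with the smaller value $p^{s-r-1}$ obtained when $\alpha\notin\Prj_2(H_r)$, produce exactly the three upper bounds on $|N(H_r)|$ appearing in the theorem. To realize equality I would introduce the induced quadratic form $g(u):=f(x_u)$ on $\Im L_f$, well-defined because $f$ vanishes on $\Ker L_f=\mathbb{F}_{q_1}^{\perp_f}$ and of rank $R_f$. For $R_f$ odd or $R_f$ even hyperbolic, pick a totally $g$-isotropic $V\subseteq\Im L_f$ of dimension $r-1$ (feasible because $r-1$ is at most the Witt index of $g$) and pick $u_0\in V^{\perp_g}$ with $g(u_0)$ prescribed: equal to $0$ in the hyperbolic case, and equal to a $\beta\in\mathbb{F}_p^*$ whose character $\bar\eta(\beta)$ aligns with $c_f$ in the odd case (existence of such $u_0$ follows from Lemma~\ref{lem:2-2} applied to the form induced on $V^{\perp_g}/V$). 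Setting $H_r:=V\oplus\mathbb{F}_p(u_0,-\alpha)$ gives fiber $T=u_0+V\subseteq\Im L_f$ with $g|_T\equiv g(u_0)$, so $\sum\chi(u)$ attains its extremum.

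\textbf{Main obstacle.} The delicate subcase is $R_f$ even elliptic at the boundary $r-1=R_f/2$, where the Witt index is only $R_f/2-1$ and no totally isotropic $V$ of the required dimension exists. Here I would take $V$ of rank $1$ via Lemma~\ref{lem:2-2}, and choose $u_0\in V^{\perp_g}$ so that $g|_T=g(u_0)+g(V)$ avoids $0$ entirely; this requires the coordinated constraint that $\bar\eta(-g(u_0)/\beta)$ be a nonsquare while $c_f\,\upsilon(-g(u_0))>0$. The hypothesis $R_f\ge 3$ ensures that $V^{\perp_g}$ is large enough to accommodate this simultaneous choice; the remaining sign bookkeeping through Lemmas~\ref{lem:7} and~\ref{lem:2-2} is routine case analysis.
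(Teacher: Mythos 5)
Your overall strategy coincides with the paper's: part (1) is proved exactly as in the paper (an $(s-r)$-dimensional subspace of $J_{e_0}\times T_\alpha$ meets $D$ in the absolute maximum $p^{s-r}-1$ points), and part (2) rests on the same reduction, namely Lemma~\ref{lem:d_r:2} plus Lemma~\ref{lem:7}, which turns $|N(H_r)|$ into a character sum of $\bar\eta(f(x_u))$ or $\upsilon(f(x_u))$ over the fibre $T=\{u:(u,-\alpha)\in H_r\}$, bounded above by $|T|\le p^{r-1}$. Where you genuinely diverge is in realizing the extremum: the paper builds $H_r$ from $L_f$-images of explicit bases supplied by Lemmas~\ref{lem:2} and~\ref{lem:2-2} (the shift $\mu_i=\alpha_i+\alpha_r$ forcing the fibre to be the coset $\alpha_r+J_{r-1}$), whereas you work with the induced nondegenerate form $g(u)=f(x_u)$ on $\Im L_f$ and choose $T=u_0+V$ with $V$ totally isotropic and $u_0\in V^{\perp_g}$. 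These are the same construction in different clothing (your $V$ is the paper's $L_f(J_{r-1})$), and your version has the merit of treating all $r$ in one framework.

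The one place that needs repair is the justification of the boundary cases. For $R_f$ odd and $r=\frac{R_f+1}{2}$, your $V$ is a maximal totally isotropic subspace, so the residual form on $V^{\perp_g}/V$ has rank $1$ and represents exactly one square class; Lemma~\ref{lem:2-2} applied to a rank-$1$ form gives $l_0=0$ and cannot prescribe $\bar\eta(g(u_0))$, so your cited justification does not apply. What saves the argument is the discriminant identity: writing $g$ as $\frac{R_f-1}{2}$ hyperbolic planes plus $\lambda t^2$ forces $\bar\eta(\lambda)=(-1)^{\frac{(R_f-1)(p-1)}{4}}\varepsilon_f$, which is precisely the value the maximization requires, but this computation must be carried out, not inferred from Lemma~\ref{lem:2-2}. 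The elliptic boundary $r=\frac{R_f}{2}+1$ has the analogous issue: the existence of $u_0\in V^{\perp_g}$ with $-g(u_0)/a$ a nonsquare is equivalent to identifying the residual anisotropic plane, and is the substantive step rather than ``routine bookkeeping.'' The paper sidesteps both verifications by switching at the boundary to the dual formula \eqref{eq:d_r:2} and exhibiting $H=(J_{e_0}\times T_{\alpha})\oplus\big\langle(u,v)\big\rangle$ with $u\in J_{e_0}^{\perp_f}$, $f(u)\neq 0$, which meets $D$ in $2p^{s-1-\frac{R_f+1}{2}}-1$ (resp.\ $2p^{s-1-\frac{R_f+2}{2}}-1$) points; you may find it cleaner to adopt that device than to complete the discriminant analysis.
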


\begin{proof}
(1) When $s_1-e_0+1 \leq r \leq s$, then $0 \leq s-r \leq s_2+e_0-1$.
 Let $T_{\alpha}=\Big\{x\in \mathbb{F}_{q_1}:\ \mathrm{Tr}_1^{s_2}(\alpha x)=0\Big\}$.
It is easy to know that $\dim(T_{\alpha})=s_1-1$.
By Lemma~\ref{lem:2}, there exists an $e_0$-dimensional subspace $J_{e_0}$ of $\F_{q_1}$ such that $f(x)=0$ for any $x\in J_{e_0}$.
Note that the dimension of the subspace $J_{e_0}\times T_{\alpha}$ is $e_0+s_2-1$. Let $H_{s-r}$ be a $(s-r)$-dimensional subspace of $J_{e_0}\times T_{\alpha}$,
then, $|H_{s-r}\cap D|=p^{s-r}-1$.
Hence, by \eqref{eq:d_r:2}, we have
$$
d_{r}(C_{\D})=n-\max\Big\{|D \cap H|: H \in [\mathbb{F},s-r]_{p}\Big\}=p^{s-1}-p^{s-r}.
$$
Thus, it remains to determine $d_r(C_{\D})$ when $0 < r \leq s_1-e_0$.

(2) When $0 < r\leq s_1-e_0$, we discuss case by case.

\textbf{Case 1:}\ $R_f (R_f\geq 3)$ is odd. In this case, $e_0 = s_1-\frac{R_f+1}{2}$ and $s_1-e_0 = \frac{R_f+1}{2}$, that is, $0< r\leq \frac{R_f+1}{2}$.

Let $H_r$ be an $r$-dimensional subspace of $\F$. If $\alpha\in  \Prj_{2}(H_{r})$, by Lemma~\ref{lem:7} and Lemma~\ref{lem:d_r:2}, we have
\begin{align*}
N(H_r)=p^{s-(r+1)}\Big(1+(-1)^{\frac{(R_f-1)(p-1)}{4}}p^{-\frac{R_f-1}{2}}\varepsilon_{f}\sum_{(u,-\alpha)\in H_{r},y_1\in\Im L_f}\bar{\eta}(f(x_{u})\big)\Big).
\end{align*}

For $0 < r \leq \frac{R_f+1}{2}$, we want to construct $H_r$ such that $N(H_r)$ reaches its maximum, that is, the number of such as $(u,-\alpha)$ is maximal in $H_{r}$ and for any $(u,-\alpha)\in H_{r}, y_1\in\Im \ L_f$ and $ \bar{\eta}(f(x_{u})) = (-1)^{\frac{(R_f-1)(p-1)}{4}}\varepsilon_{f}$. The constructing method is as follows.

 Taking an element $a\in\F_p^*$ satisfying $\overline{\eta}(a)=(-1)^{\frac{(R_f-1)(p-1)}{4}}\varepsilon_{f}$, by Lemma \ref{lem:2-2},
there exists an $r$-dimensional subspace $J_{r}$ of $\F_{q_1}$ such that $R_{J_{r}}=1, \varepsilon_{J_r}=\overline{\eta}(a)$ and $J_r\cap \F_{q_1}^{\perp_f}=\{0\}$, which concludes that there exists an $(r-1)$-dimensional subspace $J_{r-1}$ of $J_{r}$ satisfying $f(J_{r-1})=0$ and
$\overline{\eta}(f(x))=\overline{\eta}(a)$, for each $x\in J_{r}\setminus J_{r-1}$.
Let $\alpha_{1},\alpha_{2},\cdots,\alpha_{r-1}$ be an $\mathbb{F}_{p}$-basis of $J_{r-1}$. Take an element $\alpha_{r}\in J_{r}\setminus J_{r-1}$ and
set
$$
\mu_{1}=\alpha_{1}+\alpha_{r},\mu_{2}=\alpha_{2}+\alpha_{r},\cdots,\mu_{r-1}=\alpha_{r-1}+\alpha_{r},\mu_{r}=\alpha_{r},
$$
clearly, $\mu_{1},\mu_{2},\cdots,\mu_{r-1},\mu_{r}$ is an $\mathbb{F}_{p}$-basis of $J_{r}$.

Take
$$
H_{r}=\Big\langle(L_f(\mu_{1}),-\alpha),(L_f(\mu_{2}),-\alpha),\cdots,(L_f(\mu_{r-1}),-\alpha),(L_f(\mu_{r}),-\alpha)\Big\rangle
,$$
it is easily seen that $H_{r}$ is our desired $r$-dimensional subspace of $\F$ and its $N(H_r)$ reaches the maximum
$$ N(H_r)=p^{s-(r+1)}\Big(1+p^{r-1-\frac{R_f-1}{2}}\Big) = p^{s-r-1} + p^{s-1-\frac{R_f+1}{2}}. $$
So, the desired result is obtained by Lemma~\ref{lem:d_r:2} and \eqref{eq:d_r:3}.

For $r = \frac{R_f+1}{2}$, let $H_r$ be an $r$-dimensional subspace of $\F$. By Lemma~\ref{lem:7} and Lemma~\ref{lem:d_r:2}, we have
$N(H_r)\leq 2p^{s-1-\frac{R_f+1}{2}}$,
which concludes that $$ d_{r}(C_{\D})\geq p^{s-1} - 2p^{s-1-\frac{R_f+1}{2}}$$
by formula \eqref{eq:d_r:3}. On the other hand, by formula \eqref{eq:d_r:2}, we have
$$ d_{r}(C_{\D}) = p^{s-1} -1 - \max\Big\{|H\cap D|: H \in [\mathbb{F},s-\frac{R_f+1}{2}]_{p}\Big\}.$$
Now we want to construct a $(s-\frac{R_f+1}{2})$-dimensional subspace $H$ of $\F$ such that $|H\cap D|\geq 2p^{s-1-\frac{R_f+1}{2}}-1$, which concludes that
$$ d_{r}(C_{\D})\leq p^{s-1} - 2p^{s-1-\frac{R_f+1}{2}}.$$
In fact, by the proof of (1), we know that the dimension of  $ J_{e_0}$ is $ s_1-\frac{R_f+1}{2}$.
Taking $(u,v)\in D$, where $u\in J_{e_0}^{\perp_f}$ and $f(u) \neq 0$, define $H = (J_{e_0}\times T_{\alpha}) \oplus \Big\langle(u,v) \Big\rangle$,
then $H$ is our desired $(s-\frac{R_f+1}{2})$-dimensional subspace of $\F$. So, for $r = \frac{R_f+1}{2}$, we have
$$d_{r}(C_{\D}) = p^{s-1} - 2p^{s-1-\frac{R_f+1}{2}}.$$
\textbf{Case 2:}\ $R_f(R_f\geq 3) $ is even and $\varepsilon_{f}=(-1)^{\frac{R_f(p-1)}{4}}$. In this case, $e_0 = \frac{R_f}{2}$ and $R_f-e_0 = \frac{R_f}{2}$, that is, $0\leq r\leq \frac{R_f}{2}$.

Suppose $H_r$ is an $r$-dimensional subspace of $\F$ and $\alpha\in  \Prj_{2}(H_{r})$. Recall that $v(0)=p-1$ and $v(x)=-1$ for $x\in\mathbb{F}_{p}^{\ast}$ defined in Subsection~\ref{subsec:notation}.
By Lemma~\ref{lem:7} and Lemma~\ref{lem:d_r:2}, we have
\begin{align*}
N(H_r)&=p^{s-(r+1)}\Big(1+\varepsilon_{f}p^{-R_f}\sum\limits_{(u,-\alpha)\in H_r,u\in\Im L_f}\sum\limits_{z\in \F_p^{\ast}}\sigma_{z}\big((p^{\ast})^{\frac{R_f}{2}}\zeta_{p}^{-f(x_{u})}\big)\Big)\\
&=p^{s-(r+1)}\Big(1+p^{-\frac{R_f}{2}}\sum\limits_{(u,-\alpha)\in H_r,u\in\Im L_f}v(f(x_{u})\Big).
\end{align*}

Let $J_{r}=\Big\langle\overline{\mu}_{1},\overline{\mu}_{2},\cdots,\overline{\mu}_{r}\Big\rangle$ be a $r$-dimensional subspace of $J_{e_0}\Big/\F_{q_1}^{\perp_f}$.
Take
$$
H_{r}=\Big\langle(L_f(\overline{\mu}_{1}),-\alpha),(L_f(\overline{\mu}_{2}),-\alpha),\cdots,(L_f(\overline{\mu}_{r}),-\alpha)\Big\rangle
,$$
then  $N(H_r)$ reaches its maximum
$$ N(H_r)=p^{s-(r+1)}\Big(1+(p-1)p^{r-1-\frac{R_f}{2}}\Big) = p^{s-1-r} + (p-1)p^{s-2-\frac{R_f}{2}}. $$
So, for $0\leq r\leq\frac{R_f}{2}$, the desired result is obtained by Lemma~\ref{lem:d_r:2} and \eqref{eq:d_r:3}.

\textbf{Case 3}:\ $R_f (R_f\geq 3)$ is even and $\varepsilon_{f}=-(-1)^{\frac{R_f(p-1)}{4}}$. In this case, $e_0 = \frac{R_f-2}{2}$ and $R_f-e_0 = \frac{R_f}{2}+1$, that is, $0\leq r\leq \frac{R_f}{2}+1$.

Suppose $H_r$ is an $r$-dimensional subspace of $\F$ and $\alpha\in  \Prj_{2}(H_{r})$.
By Lemma~\ref{lem:7} and Lemma~\ref{lem:d_r:2}, we have
\begin{align}
N(H_r)=p^{s-(r+1)}\Big(1-p^{-\frac{R_f}{2}}\sum\limits_{(u,-\alpha)\in H_{r},u\in\Im L_f}v(f(x_{u})\Big).   \nonumber
\end{align}

For $0\leq r\leq\frac{R_f}{2}$, taking an element $a\in\F_p^*$, by Lemma~\ref{lem:2-2}, there exists an $r$-dimensional subspace $J_{r}$ of $\F_{q_1}$ such that $R_{J_{r}}=1, \varepsilon_{J_r}=\overline{\eta}(a)$ and $J_r\cap \F_{q_1}^{\perp_f}=\{0\}$, which concludes that there exists an $(r-1)$-dimensional subspace $J_{r-1}$ of $J_{r}$ satisfying $f(J_{r-1})=0$ and
$f(x)=a$, for each $x\in J_{r}\setminus J_{r-1}$.
Let $\alpha_{1},\alpha_{2},\cdots,\alpha_{r-1}$ be an $\mathbb{F}_{p}$-basis of $J_{r-1}$. Take an element $\alpha_{r}\in J_{r}\setminus J_{r-1}$ and
set
$$
\mu_{1}=\alpha_{1}+\alpha_{r},\mu_{2}=\alpha_{2}+\alpha_{r},\cdots,\mu_{r-1}=\alpha_{r-1}+\alpha_{r},\mu_{r}=\alpha_{r},
$$
it's obvious that $\mu_{1},\mu_{2},\cdots,\mu_{r-1},\mu_{r}$ is an $\mathbb{F}_{p}$-basis of $J_{r}$.
Take
$
H_{r}=\Big\langle(L_f(\mu_{1}),-\alpha),(L_f(\mu_{2}),-\alpha),\cdots,(L_f(\mu_{r-1}),-\alpha),(L_f(\mu_{r}),-\alpha)\Big\rangle
$,
then $N(H_r)$ reaches its maximum
$$ N(H_r)=p^{s-(r+1)}\Big(1+p^{r-1-\frac{R_f}{2}}\Big) = p^{s-r-1} + p^{s-1-\frac{R_f+2}{2}}. $$
So, the desired result is obtained by Lemma~\ref{lem:d_r:2} and \eqref{eq:d_r:3}.

For $r = \frac{R_f}{2}+1$, let $H_r$ be an $r$-dimensional subspace of $\F$. By Lemma~\ref{lem:7} and Lemma~\ref{lem:d_r:2} and formula \eqref{eq:d_r:3}, we have
$N(H_r)\leq 2p^{s-1-\frac{R_f+2}{2}}$,
which concludes that $$ d_{r}(C_{D})\geq p^{s-1} - 2p^{s-1-\frac{R_f+2}{2}}.$$
On the other hand,
by the proof of (1), we know that the dimension of  $ J_{e_0}$ is $ s_1-\frac{R_f+2}{2}$.
Taking $(u,v)\in D$, where $u\in J_{e_0}^{\perp_f}$ and $f(u) \neq 0$, define $H = (J_{e_0}\times T_{\alpha}) \oplus \Big\langle(u,v) \Big\rangle$, $H$ is an $(s-1-\frac{R_f}{2})$-dimensional subspace of $\F$.
So by formula \eqref{eq:d_r:2}, we have
\begin{align*}
d_{r}(C_{\D}) &= p^{s-1} -1 - \max\Big\{|H\cap D|: H \in [\mathbb{F},s-1-\frac{R_f}{2}]_{p}\Big\}\\
&\leq p^{s-1}-2p^{s-1-\frac{R_f+2}{2}}.
\end{align*}

 Therefore, for $r = \frac{R_f}{2}+1$, we have
$$d_{r}(C_{\D}) = p^{s-1} - 2p^{s-1-\frac{R_f+2}{2}}.$$
The desired result is obtained.

\end{proof}

\begin{remark} We recall the definition of $r$-MDS code \cite{TV95}. Let $C$ be an $[n, k]$ linear code over $\F_p$. For any integer $1 \leq r \leq k$, it is
known that
$$r \leq d_r(C) \leq n - k + r,$$
and $C$ is called an $r$-MDS code if $d_r(C) = n - k + r$. It is easy to check that the codes $C_{D}$ defined in \eqref{defcode1} satisfy
that $d_s(C) = n - s + s$, that is, $C_{D}$ are $s$-MDS codes.
\end{remark}

\section{Concluding Remarks}

In the work of \cite{LL22}, using a special non-degenerate homogeneous quadratic function we constructed a family of three-weight linear codes and determined their weight distributions and weight hierarchies.
In this paper, our defining set is $ D=\Big\{(x,y)\in \Big(\F_{p^{s_1}}\times\F_{p^{s_2}}\Big)\Big\backslash\{(0,0)\}: f(x)+\Tr_1^{s_2}(\alpha y)=0\Big\}$,
where $\alpha\in\mathbb{F}_{p^{s_2}}^*$ and $f(x)$ is a quadratic forms over $\mathbb{F}_{q_1}$ with values in $\F_p$, whether $f(x)$ is non-degenerate or not. We also constructed a family of three-weight linear codes with any dimension by extending known construction and determined their weight distributions and weight hierarchies.

 Let $w_{\min}$ and $w_{\max}$ denote the minimum and maximum nonzero weight of our obtained code $C_{D}$ defined in \eqref{defcode1}, respectively.
If $R_f\geq3$, then it can be easily checked that
$$
 \frac{w_{\min}}{w_{\max}}> \frac{p-1}{p}.
$$
By the results in \cite{AB98} and \cite{21YD06}, we know that every nonzero codeword of $C_{D}$ is minimal and most of the codes we constructed
are suitable for constructing secret sharing schemes with interesting properties.

\section*{Acknowledgement}

For the research, the third author was supported by the National Science Foundation of China Grant No.12001312
and the second author was supported by Key Projects in Natural Science Research of Anhui Provincial Department of Education No.2022AH050594 and Anhui Provincial Natural Science Foundation No.1908085MA02.

\section*{Declaration of competing interest}

The authors declare that they have no known competing financial interests or personal relationships that could have appeared to influence the work
reported in this paper.

\end{document}